\newcommand{\eps}{\varepsilon}
\newcommand{\R}{\ensuremath{\mathbb{R}}}
\newcommand{\etal}{\emph{et al.}\xspace}
\newcommand{\prob}[1]{\ensuremath{\textbf{{\sffamily Pr}}\hspace{-.8mm}\left[#1\right]}}
\newcommand{\norm}[1]{\lVert #1 \rVert}
\newcommand\numberthis{\addtocounter{equation}{1}\tag{\theequation}}
\newcommand{\abs}[1]{| #1 |}
\newcommand{\setdef}[2]{\{ #1 \mid #2 \}}
\newcommand{\inner}[2]{\langle #1,#2 \rangle}
\newcommand{\Gr}{\textsf{Grid}}
\newcommand{\Gb}{\overline{\mathcal{G}}}
\newcommand{\ilog}{\textup{ilog}}
\newtheorem{theorem}{Theorem}
\title{Optimal Coreset for Gaussian Kernel Density Estimation}
\author{Wai Ming Tai \\ University of Chicago}
\date{ }
\begin{document}

\maketitle

%TODO mandatory: add short abstract of the document
\begin{abstract}
	Given a point set $P\subset \mathbb{R}^d$, the kernel density estimate of $P$ is defined as 
	\[
	\overline{\mathcal{G}}_P(x) = \frac{1}{\left|P\right|}\sum_{p\in P}e^{-\left\lVert x-p \right\rVert^2}
	\] 
	for any $x\in\mathbb{R}^d$.
	We study how to construct a small subset $Q$ of $P$ such that the kernel density estimate of $P$ is approximated by the kernel density estimate of $Q$.
	This subset $Q$ is called a coreset.
	The main technique in this work is constructing a $\pm 1$ coloring on the point set $P$ by discrepancy theory and we leverage Banaszczyk's Theorem.
	When $d>1$ is a constant, our construction gives a coreset of size $O\left(\frac{1}{\varepsilon}\right)$ as opposed to the best-known result of $O\left(\frac{1}{\varepsilon}\sqrt{\log\frac{1}{\varepsilon}}\right)$.
	It is the first result to give a breakthrough on the barrier of $\sqrt{\log}$ factor even when $d=2$. 
	
\end{abstract}

\section{Introduction}

Kernel density estimation is a non-parametric way to estimate a probability distribution.
Given a point set $P\subset\mathbb{R}^d$, the kernel density estimate (KDE) of $P$ smooths out $P$ to a continuous function \cite{scott2015multivariate,silverman1986density}.
More precisely, given a point set $P\subset\mathbb{R}^d$ and a kernel $K:\mathbb{R}^d\times\mathbb{R}^d \rightarrow \mathbb{R}$, KDE is defined as the function $\Gb_P(x) = \frac{1}{\abs{P}}\sum_{p\in P} K(x,p)$ for any $x\in\mathbb{R}^d$.
Here, the point $x$ is called a \emph{query}.
One common example of kernel $K$ is the Gaussian kernel, which is $K(x,y) = e^{-\norm{x-y}^2}$ for any $x,y\in\mathbb{R}^d$, and it is the main focus of this paper.
A wide range of application includes outlier detection \cite{zou2014unsupervised}, clustering \cite{rinaldo2010generalized}, topological data analysis \cite{phillips2015geometric,chazal2017robust}, spatial anomaly detection \cite{agarwal2014union,han2019kernel}, statistical hypothesis test \cite{gretton2012kernel} and other \cite{jeff2020gaussiansketch,lee2021finding}.

Generally speaking, the techniques using kernels are called \emph{kernel methods}, in which KDE is the central role in these techniques.
Kernel methods are prevalent in machine learning and statistics and often involve optimization problems.
Optimization problems are generally hard in the sense that solving them usually has a super-linear or even  an exponential dependence on the input's size in its running time.
Therefore, reducing the size of the input will be desirable.
A straightforward way to achieve this is extracting a small subset $Q$ of the input $P$.
This paper will study the construction of the subset $Q$ such that $\Gb_Q$ approximates $\Gb_P$.

Classically, statisticians concern about different types of error such as $L_1$-error \cite{DG84} or $L_2$-error \cite{scott2015multivariate,silverman1986density}.
However, there are multiple modern applications that require $L_\infty$-error such as preserving classification margin \cite{scholkopf2002learning}, density estimation \cite{zheng2015error}, topology \cite{phillips2015geometric} and hypothesis test on distributions \cite{gretton2012kernel}.
For example, in topological data analysis, we might want to study the persistent homology of a super-level set of a kernel density estimate. 
In this case, $L_\infty$-error plays an important role here since a small perturbation could cause a significant change in its persistence diagram.
Formally, we would like to solve the following problem.
\begin{flushleft}
	\emph{Given a point set $P\subset \mathbb{R}^d$ and $\eps>0$, we construct a subset $Q$ of $P$ such that 
		\[
		\sup_{x\in\mathbb{R}^d} \abs{\Gb_P(x) - \Gb_Q(x)} = \sup_{x\in\mathbb{R}^d} \abs{\frac{1}{\abs{P}}\sum_{p\in P}e^{-\norm{x-p}^2} - \frac{1}{\abs{Q}}\sum_{q\in Q}e^{-\norm{x-q}^2}} \leq \eps.
		\]
		Then, how small can the size of $Q$, $\abs{Q}$, be?}
\end{flushleft}
We call this subset $Q$ an \emph{$\eps$-coreset}.

\subsection{Known results}

We now discuss some previous results for the size of an $\eps$-coreset.
The summary is presented in Table \ref{tbl:compare}.

\begin{savenotes}
	\begin{table}[!h]
		\begin{center}
			\begin{tabular}{|r|c|c|}
				\hline
				Paper & Coreset Size & $d$ \\
				\hline 
				\multicolumn{3}{|l|}{Previous Results} \\
				\hline
				Joshi \etal~\cite{joshi2011comparing} & $O (d/\eps^2 )$ &  any \\
				Lopaz-Paz \etal \cite{lopez2015towards} & $O (1/\eps^2 )$ & any \\
				Lacoste-Julien \etal \cite{lacoste2015sequential} & $O (1/\eps^2 )$ & any \\
				Joshi \etal~\cite{joshi2011comparing} & $O (1/\eps )$ &  1 \\
				Joshi \etal \cite{joshi2011comparing} & sub-$O (1/\eps^2 )$ & constant \\
				Phillips \cite{phillips2013varepsilon} & $O ( ((1/\eps^2)\log\frac{1}{\eps} )^{\frac{d}{d+2}} )$ & constant \\
				Phillips and Tai \cite{phillips2018improved} & $O ((1/\eps)\log^d \frac{1}{\eps} )$ & constant \\
				Phillips and Tai \cite{phillips2019near} & $O ((\sqrt{d}/\eps)\sqrt{\log \frac{1}{\eps}} )$ & any \\
				\hline
				Phillips \cite{phillips2013varepsilon} & $\Omega (1/\eps )$ &  any\\
				Phillips and Tai \cite{phillips2018improved} & $\Omega (1/\eps^2 )$ &  $\geq\frac{1}{\eps^2}$\\
				Phillips and Tai \cite{phillips2019near} & $\Omega (\sqrt{d}/\eps )$ & $\leq\frac{1}{\eps^2}$ \\
				\hline
				\multicolumn{3}{|l|}{Closely-related Result} \\
				\hline 
				Karnin and Liberty \cite{karnin2019discrepancy} & $O (\sqrt{d}/\eps )$\footnote{This result assumes that both data points and queries lie inside a constant region} & any \\
				\hline
				\multicolumn{3}{|l|}{Our Result} \\
				\hline
				\textbf{This paper} & $O (1/\eps )$ & constant \\
				\hline
			\end{tabular} 
		\end{center}
		\caption{Asymptotic $\eps$-coreset sizes in terms of $\eps$ and $d$.
			\label{tbl:compare}}
	\end{table}
\end{savenotes}

Josh \etal \cite{joshi2011comparing} showed that random sampling can achieve the size of $O(\frac{d}{\eps^2})$.
They investigated the VC-dimension of the super-level sets of a kernel and analyzed that the sample size can be bounded by it.
In particular, the super-level sets of the Gaussian kernel are balls in $\mathbb{R}^d$.
It reduces the problem to bounding the sample size of the range space of balls.

Lopaz-Paz \etal \cite{lopez2015towards} later proved that the size of the coreset can be reduced to $O(\frac{1}{\eps^2})$ by random sampling.
They studied the reproducing kernel Hilbert space (RKHS) associated with a positive-definite kernel \cite{aronszajn1950theory,wahba1999support,sriperumbudur2010hilbert}.
Note that the Gaussian kernel is a positive-definite kernel.
In RKHS, one can bound the $L_\infty$-error between two KDEs of point sets $P$ and $Q$ by the kernel distance of $P$ and $Q$.
They showed that the sample size of $O(\frac{1}{\eps^2})$ is sufficient to bound the kernel distance.

Other than random sampling, Lacoste-Julien \etal \cite{lacoste2015sequential} showed a greedy approach can also achieve the size of $O(\frac{1}{\eps^2})$.
They applied Frank-Wolfe algorithm \cite{clarkson2010coresets,gartner2009coresets} in RKHS to bound the error of the kernel distance.

Note that all of the above results have a factor of $\frac{1}{\eps^2}$.
Josh \etal \cite{joshi2011comparing} first showed that a sub-$O(\frac{1}{\eps^2})$ result can be obtained by reducing the problem to constructing an $\eps$-approximation for the range space of balls \cite{matousek2009geometric}.
They assumed that $d$ is constant.
For the case of $d=1$, their result gives the size of $O(\frac{1}{\eps})$.

Later, Phillips \cite{phillips2013varepsilon} improved the result to $O ( (\frac{1}{\eps^2}\log\frac{1}{\eps} )^{\frac{d}{d+2}} )$ for constant $d$ via geometric matching.
It is based on the discrepancy approach.
Namely, they construct a $\pm 1$ coloring on the point set, recursively drop the points colored $-1$ and construct another $\pm 1$ coloring on the points colored $+1$.
We will discuss it in more detail below.
Notably, for the case of $d=2$, their bound is $O(\frac{1}{\eps}\sqrt{\log \frac{1}{\eps}})$ which is nearly-optimal (as a preview, the optimal bound is $\Omega(\frac{1}{\eps})$) and is the first nearly-linear result for the case of $d>1$.

Recently, Phillips and Tai \cite{phillips2018improved} further improved the size of a coreset to $O(\frac{1}{\eps}\log^d \frac{1}{\eps})$ for constant $d$.
It is also based on the discrepancy approach.
They exploited the fact that the Gaussian kernel is multiplicatively separable.
It implies that the Gaussian kernel can be rewritten as the weighted average of a family of axis-parallel boxes in $\mathbb{R}^d$.
Finally, they reduced the problem to Tusn{\'a}dy's problem \cite{bansal2017algorithmic,aistleitner2016tusnady}.

Also, Phillips and Tai \cite{phillips2019near} proved a nearly-optimal result of $O(\frac{\sqrt{d}}{\eps}\sqrt{\log \frac{1}{\eps}})$ shortly after that.
It is also based on the discrepancy approach.
They observed that the underlying structure of the positive-definite kernel allows us to bound the norm of the vectors and apply the lemma in \cite{matouvsek2020factorization}, which used Banaszczyk's Theorem \cite{banaszczyk1998balancing,bansal2018gram}.
Recall that the Gaussian kernel is a positive-definite kernel.

Except for the upper bound, there are some results on the lower bound for the size of an $\eps$-coreset.
Phillips \cite{phillips2013varepsilon} provided the first lower bound for the size of a coreset.
They proved a lower bound of $\Omega(\frac{1}{\eps})$ by giving an example that all points are spread out.
When assuming $d>\frac{1}{\eps^2}$, Phillips and Tai  \cite{phillips2018improved} gave another example that forms a simplex and showed a lower bound of $\Omega(\frac{1}{\eps^2})$.
Later, Phillips and Tai \cite{phillips2019near} combined the techniques of the above two results and showed the lower bound of $\Omega(\frac{\sqrt{d}}{\eps})$.

There are other conditional bounds for this problem.
We suggest the readers refer to \cite{phillips2019near} for a more extensive review.
Recently, Karnin and Liberty \cite{karnin2019discrepancy} defined the notion of Class Discrepancy which governs the coreset-complexity of different families of functions. 
Specifically, for analytic functions of squared distances (such as the Gaussian kernel), their analysis gives a discrepancy bound $D_m = O(\frac{\sqrt{d}}{m})$ which gives a coreset of size $O(\frac{\sqrt{d}}{\eps})$. 
Their approach also used the discrepancy technique or, more precisely, Banaszczyk's Theorem \cite{banaszczyk1998balancing,bansal2018gram}.
Unfortunately, their analysis requires \emph{both} the point set $P$ and the query $x$ lie in a ball of a fixed radius $R$.
Therefore, their result has a dependence on $R$.
Strictly speaking, their result is not comparable to ours.
It is not clear how to remove this assumption of $R$ based on their result.
Also, the lower bound constructions in \cite{phillips2013varepsilon,phillips2019near} rely on the fact that $P$ is in an unbounded region and hence it is not clear how their result is comparable to the existing lower results.

\subsection{Related works}

In computational geometry, an $\eps$-approximation is the approximation of a general set by a smaller subset.
Given a set $S$ and a collection $\mathcal{C}$ of subsets of $S$, a subset $A\subset S$ is called an $\eps$-approximation if $\abs{\frac{\abs{T}}{\abs{S}} - \frac{\abs{T\cap A}}{\abs{A}}}\leq \eps$ for all $T\in \mathcal{C}$.
The pair $(S,\mathcal{C})$ is called a set system (also known as a range space or a hypergraph).
One can rewrite the above guarantee as $\abs{\frac{1}{\abs{S}}\sum_{x\in S}\mathbbm{1}_T(x) - \frac{1}{\abs{A}}\sum_{x\in A}\mathbbm{1}_{T}(x)}\leq \eps$ where $\mathbbm{1}_T$ is the indicator function of set $T$.
If we replace this indicator function by a kernel such as the Gaussian kernel, it is the same as our $\eps$-coreset.
There is a rich history on the construction of an $\eps$-approximation \cite{chazelle2001discrepancy,matousek2009geometric}.
One notable method is discrepancy theory, which is also our main technique.
There is a wide range of techniques employed in this field.
In the early 1980s, Beck devised the technique of partial coloring \cite{beck1981roth}, and later a refinement of this technique called entropy method was introduced by Spencer \cite{spencer1985six}.
The entropy method is first used to solve the famous "six standard deviations" theorem: given a set system of $n$ points and $n$ subsets, there is a coloring of discrepancy at most $6\sqrt{n}$. In contrast, random coloring gives the discrepancy of $O(\sqrt{n\log n})$.
%However, this entropy method is a non-constructive approach.
%Namely, they did not provide an efficient algorithm to find the coloring achieving such bound.
A more geometric example in discrepancy theory is Tusn{\'a}dy's problem.
It states that, given point set $P$ of size $n$ in $\R^d$, construct a $\pm 1$ coloring $\sigma$ on $P$ such that the discrepancy $\min_{\sigma}\max_{R}\abs{\sum_{P\cap R}\sigma(p)}$ is minimized where $\max_R$ is over all axis-parallel boxes $R$.
One previous approach of our $\eps$-coreset problem reduces the problem to Tusn{\'a}dy's problem.

On the topic of approximating KDE, Fast Gauss Transform \cite{greengard1991fast} is a method to preprocess the input point set such that the computation of KDE at a query is faster than the brute-force approach.
The idea in this method is expanding the Gaussian kernel by Hermite polynomials and truncating the expansion.
Assuming that the data set lies inside a bounded region, the query time in this method is poly-logarithmic of $n$ for constant dimension $d$.
Also, Charikar \etal~\cite{charikar2020kernel} studied the problem of designing a data structure that preprocesses the input to answer a KDE query in a faster time.
They used locality-sensitive hashing to perform their data structure.
However, the guarantee they obtained is a relative error, while ours is an additive error.
More precisely, given a point set $P\subset \mathbb{R}^d$, Charikar \etal designed a data structure such that, for any query $x'\in\mathbb{R}^d$, the algorithm answers the value $\Gb_P(x') = \sum_{p\in P} e^{-\norm{x'-p}^2}$ within $(1+\eps)$-relative error.
Also, the query time of their data structure is sublinear of $n$.

\subsection{Our result}

\begin{wraptable}{r}{3.3in}
	\begin{tabular}{c | c  c c}
		\hline 
		$d$ & Upper & Lower & 
		\\ \hline
		$1$ & $1/\eps$ & $1/\eps$ & \cite{joshi2011comparing,phillips2013varepsilon}
		\\
		constant & $1/\eps$ &  &  \textbf{new}
		\\
		any & $\sqrt{d}/\eps \cdot \sqrt{\log \frac{1}{\eps}}$ & $\sqrt{d}/\eps$ & \cite{phillips2019near}
		\\
		$\geq \frac{1}{\eps^2}$ & $1/\eps^2$ & $1/\eps^2$ & \cite{bach2012equivalence,phillips2018improved}
		\\ \hline
	\end{tabular}
	\caption{\label{tbl:results} Size bounds of $\eps$-coresets for the Gaussian kernel}
\end{wraptable}

We construct an $\eps$-coreset and bound the size of the $\eps$-coreset via discrepancy theory.
Roughly speaking, we construct a $\pm 1$ coloring on our point set such that its discrepancy is small.
Then, we drop the points colored $-1$ and recursively construct a $\pm 1$ coloring on the points colored $+1$.
Eventually, the remaining point set is the desired coreset.
A famous theorem in discrepancy theory is Banaszczyk's Theorem \cite{banaszczyk1998balancing,bansal2018gram}.
We will use Banaszczyk's Theorem to construct a coloring and prove the discrepancy is small by induction.
To the best of our knowledge, this induction analysis combining with Banaszczyk's Theorem has not been seen in discrepancy theory before.
In the constant dimensional space, we carefully study the structure of the Gaussian kernel and it allows us to construct an $\eps$-coreset of size $O(1/\eps)$.
Our result is the first result to break the barrier of $\sqrt{\log}$ factor even when $d=2$.
The summary of the best-known result is shown in Table \ref{tbl:results}.

\begin{restatable}{theorem}{mainthm}\label{thm:main_g}
	Suppose $P\subset \mathbb{R}^d$ a point set of size $n$.
	Let $\Gb_P$ be the Gaussian kernel density estimate of $P$, i.e. $\Gb_P(x) = \frac{1}{\abs{P}}\sum_{p\in P} e^{-\norm{x-p}^2}$ for any $x\in\mathbb{R}^d$.
	For a fixed constant $d$, there is an algorithm that constructs a subset $Q\subset P$ of size $O (\frac{1}{\eps} )$ such that $\sup_{x\in\mathbb{R}^d}\abs{\Gb_P(x) - \Gb_Q(x)} < \eps$ and has a polynomial running time in $n$.
\end{restatable}

Even if $d=1$, the best known result is $O(1/\eps)$ by \cite{joshi2011comparing,phillips2018improved}, which is optimal.
Their approach is to reduce the problem to Tusn{\'a}dy's problem.
A trivial solution of Tusn{\'a}dy's problem (and hence our problem) is: sort $P$ and assign $\pm 1$ on each point alternately.
However, it is not clear that how this simple solution can be generalized to the higher dimensional case.
Our algorithm gives a non-trivial perspective even though the optimal result was achieved previously.

\section{Preliminaries}

Our approach for constructing a coreset relies on discrepancy theory, which is a similar technique in range counting coreset \cite{chazelle1996linear,phillips2008algorithms,bentley1980decomposable}.
We first introduce an equivalent problem (up to a constant factor) as follows.
\begin{flushleft}
	\emph{Given a point set $P\subset \mathbb{R}^d$, what is the smallest quantity of $\sup_{x\in\mathbb{R}^d}\abs{\sum_{p\in P}\sigma(p)e^{-\norm{x-p}^2}}$ over all $\sigma$ in the set of colorings from $P$ to $\{-1,+1\}$?}
\end{flushleft}
Now, one can intuitively view the equivalence in the following way.
If we rewrite the objective as:
\[
\frac{1}{\abs{P}}\abs{\sum_{p\in P}\sigma(p)e^{-\norm{x-p}^2}} = \abs{\frac{1}{\abs{P}}\sum_{p\in P}e^{-\norm{x-p}^2} - \frac{1}{\abs{P}/2}\sum_{p\in P_+}e^{-\norm{x-p}^2}}
\]
where $P_+\subset P$ is the set of points that is assigned $+1$, then we can apply the halving technique \cite{chazelle1996linear,phillips2008algorithms} which recursively invokes the coloring algorithm and retains the points assigned $+1$ until the subset of the desired size remains.
Note that there is no guarantee that half of the points are assigned $+1$, while the other half is assigned $-1$.
However, we can handle this issue by some standard techniques \cite{matousek2009geometric} or see our proof for details.

Also, we define the following notations.
Given a point set $P\subset \mathbb{R}^d$, a coloring $\sigma:P \rightarrow \{-1,+1\}$ and a point $x\in S$, we define the signed discrepancy $\mathcal{D}_{P,\sigma}(x)$ as 
\[
\mathcal{D}_{P,\sigma}(x) = \sum_{p\in P}\sigma(p)e^{-\norm{x-p}^2}
\]
It is worth noting that we expect $\abs{\mathcal{D}_{P,\sigma}(x)} < O(1)$ in order to construct an $\eps$-coreset of size $O(\frac{1}{\eps})$ via this halving technique.

An important result in discrepancy theory is Banaszczyk's Theorem  \cite{banaszczyk1998balancing}.
\begin{theorem}[Banaszczyk's Theorem \cite{banaszczyk1998balancing}]\label{thm:b_thm_orig}
	Suppose we are given a convex body $K\subset \mathbb{R}^m$ of the Gaussian measure at least $\frac{1}{2}$ and $n$ vectors $v^{(1)},v^{(2)},\dots,v^{(n)}\in \mathbb{R}^m$ of norm at most $1$, there is a coloring $\sigma:[n]  \rightarrow \{-1,+1\}$ such that the vector $\sum_{i=1}^n \sigma(i)v^{(i)} \in cK= \setdef{c\cdot y}{y\in K}$. Here, $c$ is an absolute constant and the Gaussian measure of a convex body $K$ is defined as $\int_{x\in K}\frac{1}{(2\pi)^{d/2}}e^{-\norm{x}^2/2}dx$.
\end{theorem}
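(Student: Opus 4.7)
The statement to be proved here is Banaszczyk's original vector balancing theorem, a deep convex-geometric result being quoted from \cite{banaszczyk1998balancing}; the paper itself uses it as a black box. Still, let me sketch how I would approach a proof from scratch, following what is by now the standard incremental/potential-function viewpoint (the original proof is non-constructive and a bit more subtle).

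The plan is to build the signed sum one vector at a time and maintain a potential $\Phi$ that certifies the partial sum $x_i = \sum_{j \leq i} \sigma_j v^{(j)}$ stays inside a controlled convex region. First I would reduce to the symmetric convex body case by replacing $K$ with $K \cap (-K)$, losing only a constant factor in Gaussian measure and in the final scaling $c$. Then, for each incoming $v^{(i)}$, I would pick $\sigma_i \in \{-1,+1\}$ greedily to keep $\Phi(x_i)$ as small as possible.

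The candidate potential is a Laplace-transform-style functional associated with the polar body, of the form $\Phi(x) = \log \int \cosh\bigl(\langle x, y\rangle\bigr)\, d\mu(y)$ for a carefully chosen probability measure $\mu$. The key per-step inequality comes from
\[
\tfrac{1}{2}\cosh(a+b) + \tfrac{1}{2}\cosh(a-b) = \cosh(a)\cosh(b) \leq \cosh(a)\, e^{b^2/2},
\]
which lets one average the two choices of $\sigma_i$ and conclude that at least one sign achieves $\Phi(x_i) - \Phi(x_{i-1}) \leq O(\|v^{(i)}\|_2^2)$. Since $\|v^{(i)}\|_2 \leq 1$, summing over all $n$ steps gives $\Phi(x_n) \leq \Phi(0) + O(n)$.

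The Gaussian-measure hypothesis $\gamma(K) \geq 1/2$ enters at two places: (i) to construct the measure $\mu$ on (a scaling of) the polar so that the ``typical'' $y$ it sees behaves subgaussianly, and (ii) to translate the bound on $\Phi(x_n)$ back into the statement $x_n \in cK$ via a duality/Markov argument. Concretely, one shows that if $x \notin cK$ then some $y$ in the support of $\mu$ witnesses $\langle x, y\rangle$ large enough to force $\Phi(x)$ above the level proved in the previous step. The hard part, and the true content of Banaszczyk's theorem, is producing the measure $\mu$ from the Gaussian-measure assumption with the right tail bounds; this is where the convex-geometric core of the argument lives, and in Banaszczyk's paper it is done through an elegant but nontrivial duality that I do not expect to rederive in a short sketch. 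For the purposes of the present paper, I would simply invoke Theorem \ref{thm:b_thm_orig} as a black box (and, when a constructive version is needed, cite the random-walk algorithm of Bansal--Dadush--Garg--Lovett which achieves the same guarantee algorithmically).
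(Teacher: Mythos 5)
This statement is an external citation: the paper does not prove Banaszczyk's theorem, it quotes it from \cite{banaszczyk1998balancing} and uses it as a black box (and relies on the constructive variant \cite{bansal2018gram} for the actual algorithm). Your decision to do the same is therefore the correct one, and in that sense your ``proof'' matches the paper's treatment.

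One small point worth flagging about the sketch you do give: the potential-function / greedy-sign argument you outline, with the $\cosh$ averaging identity and a Laplace-transform-style functional over the polar body, is really the template of the \emph{later} constructive line of work (Bansal, Dadush, Garg, Lovett, and the related ``Gram--Schmidt walk'' papers), not Banaszczyk's original argument. Banaszczyk's 1998 proof is non-constructive and proceeds by induction on the number of vectors: the key technical lemma is a purely convex-geometric fact about how the Gaussian measure of a symmetric convex body behaves under Minkowski-type updates of the form $K \mapsto (K+v)\cap(K-v)$ (suitably rescaled), which lets one peel off vectors one at a time while preserving the ``Gaussian measure $\geq 1/2$'' invariant; at the end one reads off a valid coloring. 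The paper itself summarizes this accurately when it says the proof ``relies on the property of convex geometry and recursively enumerate[s] all possible colorings'' and is non-constructive. You do acknowledge the original is ``non-constructive and a bit more subtle,'' so there is no error here --- just be aware that the route you sketched, if carried through, proves the \emph{constructive} Theorem \ref{thm:b_thm} rather than reproducing Banaszczyk's original argument for Theorem \ref{thm:b_thm_orig}. For the purposes of this paper, either suffices, since only the conclusion is used.
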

The original proof of this theorem is non-constructive.
Bansal \etal \cite{bansal2018gram} proved that there is an efficient algorithm to construct the coloring in Banaszczyk's Theorem.
Moreover, assuming $m<n$, the running time is $O(n^{\omega+1})$ where $\omega$ is the exponent of matrix multiplication.
\begin{theorem}[Constructive version of Banaszczyk's Theorem \cite{bansal2018gram}]\label{thm:b_thm}
	Suppose we are given $n$ vectors $v^{(1)},\dots,v^{(n)}\in\mathbb{R}^m$ of norm at most $1$, there is an efficient randomized algorithm that constructs a coloring $\sigma$ on $P$ with the following guarantee: there are two absolute constants $C',C''$ such that, for any unit vector $\theta\in\mathbb{R}^m$ and $\alpha>0$, we have
	\[
	\prob{\abs{\inner{\theta}{X}}>\alpha} < C'e^{-C''\alpha^2}
	\]
	where $X$ is the random variable of $\sum_{i=1}^n\sigma(i)v^{(i)}$.
	The probability in the above statement is distributed over all $\pm 1$ colorings.
\end{theorem}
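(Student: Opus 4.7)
The plan is to promote the distributional guarantee of Theorem \ref{thm:b_thm_equ} to an algorithmic one by designing an explicit randomized process on fractional colorings, namely the \emph{Gram--Schmidt walk} on the cube $[-1,1]^n$. I would run a discrete-time walk $x_0=0, x_1, x_2, \ldots$ in $\mathbb{R}^n$ that preserves $x_t \in [-1,1]^n$ and eventually reaches a vertex $x_T \in \{-1,+1\}^n$; the output coloring is $\sigma(i):=x_T(i)$, and by construction the random variable of interest is $X=\sum_i \sigma(i)v^{(i)} = Vx_T$, where $V$ is the $m\times n$ matrix whose columns are the input vectors $v^{(i)}$.

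At step $t$ I classify coordinates as \emph{alive} ($\abs{x_t(i)}<1$) or \emph{frozen} ($\abs{x_t(i)}=1$), pick a pivot $p$ from the alive set, and compute a step direction $u_t \in \mathbb{R}^n$ supported on the alive coordinates so that $Vu_t$ equals the component of $v^{(p)}$ orthogonal to the span of the vectors $\{v^{(i)} : i \text{ alive},\, i\neq p\}$. The update $x_{t+1}=x_t+\delta_t u_t$ uses a centered two-atom random scalar $\delta_t$ placed at the endpoints of the largest interval keeping $x_{t+1}\in[-1,1]^n$; this makes $x_t$ a martingale, freezes at least one new coordinate per step, and forces termination within $n$ steps.

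For the analysis I fix an arbitrary unit test vector $\theta\in\mathbb{R}^m$ and study the real-valued martingale $M_t:=\inner{Vx_t}{\theta}$, whose increments are $M_{t+1}-M_t=\delta_t\inner{Vu_t}{\theta}$. Since $\abs{\delta_t}$ is bounded, each increment is sub-Gaussian with proxy variance $O(\inner{Vu_t}{\theta}^2)$. The key structural fact is that the Gram--Schmidt choice forces the family $\{Vu_t\}_t$ to be mutually orthogonal with $\norm{Vu_t}\leq \max_i\norm{v^{(i)}}\leq 1$, which implies $\sum_t \inner{Vu_t}{\theta}^2 \leq \norm{\theta}^2 = 1$. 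Azuma--Hoeffding for sub-Gaussian martingale differences then yields $\prob{\abs{\inner{X}{\theta}}>\alpha}<C'e^{-C''\alpha^2}$ for absolute constants $C', C''$, which is exactly the desired guarantee.

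The main obstacle is verifying the orthogonality identity $\sum_t \inner{Vu_t}{\theta}^2 \leq \norm{\theta}^2$: one must show inductively that each new step direction $Vu_t$ lies in the orthogonal complement of $\{Vu_s\}_{s<t}$ even though the alive set shrinks unpredictably and the pivot changes from step to step. Once that structural claim is established the sub-Gaussian tail follows routinely from martingale concentration, and the runtime bound $O(n^{\omega+1})$ follows from $n$ iterations of a single Gram--Schmidt step, each implementable in $O(n^\omega)$ time by a standard rank-one update of the orthogonal projector onto the span of the remaining alive vectors.
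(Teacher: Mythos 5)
This statement is imported from Bansal, Dadush, Garg, and Lovett~\cite{bansal2018gram}; the paper does not prove it, only cites it. Your proposal correctly identifies the Gram--Schmidt walk as the underlying algorithm, so the high-level plan matches the cited source. However, the step you flag as ``the main obstacle'' is in fact false as stated, and this is where the argument collapses.

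You claim that the step directions $\{Vu_t\}_t$ are mutually orthogonal, so that $\sum_t \inner{Vu_t}{\theta}^2 \le \norm{\theta}^2$. This is not true. Consider two consecutive time steps $s<t$ in which the pivot $p$ has not changed but the alive set has shrunk, $A_t\subsetneq A_s$. Writing $P_S$ for the orthogonal projector onto $\mathrm{span}\{v^{(i)}:i\in S\}$, we have $Vu_s = v^{(p)}-P_{A_s\setminus p}v^{(p)}$ and $Vu_t=v^{(p)}-P_{A_t\setminus p}v^{(p)}$. Since $Vu_s$ is orthogonal to $\mathrm{span}\{v^{(i)}:i\in A_s\setminus p\}\supseteq \mathrm{span}\{v^{(i)}:i\in A_t\setminus p\}$, the cross term with $P_{A_t\setminus p}v^{(p)}$ vanishes and one computes $\inner{Vu_s}{Vu_t}=\inner{Vu_s}{v^{(p)}}=\normtwo{Vu_s}^2>0$. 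So the step directions are generically positively correlated, not orthogonal, and the inequality $\sum_t\inner{Vu_t}{\theta}^2\le 1$ does not follow. The sub-Gaussian tail therefore does not reduce to Azuma--Hoeffding in the way you describe.

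The actual BDGL analysis is substantially more delicate. It does not argue orthogonality of the $Vu_t$. Instead it fixes a specific pivot rule (largest alive index), partitions time into phases by pivot, exploits a nested/telescoping orthogonality among the \emph{increments} of the step directions within and across phases, and proves a PSD bound of the form $\sum_t \E[\delta_t^2]\,(Vu_t)(Vu_t)^T \preceq c\cdot I$ on the covariance of $Vx_T$; the sub-Gaussian concentration is then extracted from this matrix bound via a Freedman-type martingale inequality, not from the plain orthogonality you posit. Your sketch therefore has the right algorithm but a genuinely incorrect key lemma, and the remaining ``routine'' martingale step would not go through as written.
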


Finally, we introduce a useful theorem which is Markov Brother's Inequality.
\begin{theorem}[Markov Brother's Inequality \cite{markov1889question}]\label{lem:mb_ineq}
	Let $\mathcal{P}(x)$ be a polynomial of degree $\rho$.
	Then,
	\[
	\sup_{x\in[0,1]}\abs{\mathcal{P}'(x)} \leq 2\rho^2\sup_{x\in[0,1]}\abs{\mathcal{P}(x)}
	\]
	Here, $\mathcal{P}'$ is the derivative of $\mathcal{P}$.
\end{theorem}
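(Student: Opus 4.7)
The plan is to reduce the statement on $[0,1]$ to the classical Markov inequality on $[-1,1]$, namely
\[
\sup_{y\in[-1,1]}\abs{\mathcal{Q}'(y)} \leq \rho^2 \sup_{y\in[-1,1]}\abs{\mathcal{Q}(y)}
\]
for every polynomial $\mathcal{Q}$ of degree $\rho$. The reduction is just an affine change of variables: set $y = 2x - 1$ and $\mathcal{Q}(y) = \mathcal{P}((y+1)/2)$, so that $\mathcal{Q}$ is again a polynomial of degree $\rho$, $\sup_{[-1,1]}\abs{\mathcal{Q}} = \sup_{[0,1]}\abs{\mathcal{P}}$, and $\mathcal{Q}'(y) = \tfrac{1}{2}\mathcal{P}'((y+1)/2)$, so $\sup_{[-1,1]}\abs{\mathcal{Q}'} = \tfrac{1}{2}\sup_{[0,1]}\abs{\mathcal{P}'}$. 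Plugging these into the classical bound produces exactly the factor $2\rho^2$ in the stated inequality.

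Next I would prove the classical bound on $[-1,1]$ using the Chebyshev polynomials $T_\rho(y) = \cos(\rho\arccos y)$. The two key properties I need are: (i) $\abs{T_\rho(y)}\leq 1$ on $[-1,1]$, with equality attained at the $\rho+1$ Chebyshev nodes $y_k = \cos(k\pi/\rho)$ where $T_\rho(y_k) = (-1)^k$; and (ii) $T_\rho'(\pm 1) = \pm\rho^2$. Given these, I would argue by contradiction: suppose some polynomial $\mathcal{Q}$ of degree $\rho$ satisfies $\sup_{[-1,1]}\abs{\mathcal{Q}}\leq 1$ but $\abs{\mathcal{Q}'(y^*)}>\rho^2$ for some $y^*\in[-1,1]$. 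The extremum of $\abs{\mathcal{Q}'}$ on $[-1,1]$ is attained at an endpoint (this itself follows from Bernstein's inequality $\abs{\mathcal{Q}'(y)} \leq \rho/\sqrt{1-y^2}$ inside the open interval, which can be proved by the trigonometric substitution $y=\cos\theta$), so we may assume $y^* = 1$ and, after flipping a sign, $\mathcal{Q}'(1)>\rho^2$. Then consider the difference $D(y) = T_\rho(y) - \mathcal{Q}(y)$, a polynomial of degree at most $\rho$. At each Chebyshev node $y_k$, the sign of $D(y_k)$ equals the sign of $T_\rho(y_k) = (-1)^k$, because $\abs{\mathcal{Q}(y_k)}\leq 1$ with equality only if $\mathcal{Q}(y_k) = T_\rho(y_k)$; treating these boundary cases carefully still yields $\rho$ sign changes and hence $\rho$ zeros of $D$ in $[-1,1]$. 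But $D'(1) = \rho^2 - \mathcal{Q}'(1) < 0$, which combined with $D(1) \leq 0$ forces an additional zero of $D$ strictly to the left of $1$, producing $\rho+1$ zeros of a degree-$\rho$ polynomial and hence $D\equiv 0$. This contradicts $\mathcal{Q}'(1)>\rho^2 = T_\rho'(1)$.

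The main obstacle I expect is the sign-change accounting at the Chebyshev nodes, because the weak inequalities $\abs{\mathcal{Q}(y_k)}\leq 1$ can be attained with equality, in which case $D(y_k) = 0$ rather than having the strict sign one wants. The cleanest way around this is to perturb $\mathcal{Q}$ by replacing it with $(1-\delta)\mathcal{Q}$ for small $\delta>0$, which makes the inequalities strict while shrinking $\mathcal{Q}'(1)$ only by a factor $1-\delta$; one then obtains a contradiction for $\delta$ small enough and passes to the limit. With this care the Chebyshev interlacing argument goes through and yields exactly the constant $\rho^2$, giving the desired $2\rho^2$ on $[0,1]$ after the affine change of variables.
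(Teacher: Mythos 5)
The paper does not prove this statement: Theorem~\ref{lem:mb_ineq} is stated as a classical result with a direct citation to Markov (1889), so there is no in-paper proof to compare against. Your reduction from $[0,1]$ to $[-1,1]$ via the affine substitution $y=2x-1$ is correct and accounts exactly for the factor $2$, and appealing to the Chebyshev extremal polynomial $T_\rho$ for the inequality on $[-1,1]$ is the standard route.

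That said, the endgame of your sketch has two genuine errors. First, the claim that the maximum of $\abs{\mathcal{Q}'}$ on $[-1,1]$ is attained at an endpoint is false in general (it fails, for instance, for suitable cubics); what Bernstein's bound $\abs{\mathcal{Q}'(y)}\le \rho/\sqrt{1-y^2}$ actually gives, via $\sin t \ge 2t/\pi$, is that $\abs{\mathcal{Q}'(y)}\le\rho^2$ whenever $\abs{y}\le\cos\bigl(\pi/(2\rho)\bigr)$, so that any violator $y^*$ must lie near an endpoint, and the Chebyshev-interlacing argument must then be run in that endpoint region rather than assuming $y^*=1$. Second, your sign and zero-count are off: with $D=T_\rho-\mathcal{Q}$ one has $D(1)=1-\mathcal{Q}(1)\ge 0$, not $\le 0$, and the extra object acquiring a forbidden zero is $D'$, not $D$. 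Concretely, after the $(1-\delta)$ perturbation the $\rho$ sign changes of $D$ at the nodes put $\rho$ zeros $z_1>\dots>z_\rho$ of $D$ in the open subintervals, Rolle puts $\rho-1$ zeros of $D'$ in $(z_\rho,z_1)$, and then $D(z_1)=0\le D(1)$ together with $D'(1)=\rho^2-\mathcal{Q}'(1)<0$ forces yet another zero of $D'$ in $(z_1,1)$, giving $\rho$ zeros for the degree-$(\rho-1)$ polynomial $D'$, hence $D'\equiv 0$, hence $D\equiv 0$ and $\mathcal{Q}=T_\rho$, contradicting $\mathcal{Q}'(1)>\rho^2$. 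With those corrections the Chebyshev argument closes; as written it does not.
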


\section{Proof overview}\label{sec:overview}

As we mentioned before, our equivalent problem statement suggests that we need to construct a $\pm 1$ coloring on the input point set such that the absolute value of the signed discrepancy at all points is small.
In this section, we will give an overview on how we construct the coloring and how it gives us the desired guarantees.

For exposition purposes, we illustrate the idea for the case of $d=1$ even though previous results \cite{joshi2011comparing,phillips2018improved} showed this case is trivial.
Recall that our problem definition is: given a point set $P\subset \mathbb{R}$ of size $n$, construct a $\pm 1$ coloring $\sigma$ on $P$ such that the absolute value of the signed discrepancy
\[
\abs{\mathcal{D}_{P,\sigma}(x)} = \abs{\sum_{p\in P}\sigma(p)e^{-(x-p)^2}}
\]
is bounded from above by a constant for \emph{all} $x\in\mathbb{R}$.

\subparagraph{Some general observations} 
We first make some observations.
Note that $\mathcal{D}_{P,\sigma}$ is a smooth function of $x$ that the slope at any $x$ is bounded.
It means that if $\abs{\mathcal{D}_{P,\sigma}(x_0)}$ is small for some point $x_0$ then $\abs{\mathcal{D}_{P,\sigma}(y)}$ is also small for any point $y$ at a neighborhood of $x_0$.
Another observation is that $\mathcal{D}_{P,\sigma}$ is basically a linear combination of Gaussians and hence $\abs{\mathcal{D}_{P,\sigma}(x)}$ is small for any $x$ that is far away from all points in $P$.

Combining these two observations, if we lay down a grid on $\mathbb{R}$ and consider the grid points that is not too far away from $P$, then we only need to construct a coloring $\sigma$ such that $\abs{\mathcal{D}_{P,\sigma}(x)}$ is small for all $x$ in a \emph{finite} set and it implies that $\abs{\mathcal{D}_{P,\sigma}(x)}$ is small for all $x\in\mathbb{R}$.
It is crucial because we preview that our algorithm for constructing the coloring $\sigma$ is a randomized algorithm and the size of the finite set controls the number of events when we apply the union bound.
Note that these observations hold for \emph{any} coloring.

\subparagraph{Techniques from the previous result}
Now, we make the above observations more quantitative.
Since the slope of each Gaussian at any point is bounded by $O(1)$ and there are $n$ Gaussians in $\mathcal{D}_{P,\sigma}$, by triangle inequality, the absolute value of the slope of $\mathcal{D}_{P,\sigma}$ at any point is bounded by $O(n)$.
Hence, if $\abs{\mathcal{D}_{P,\sigma}(x_0)}$ is bounded by $\alpha$ for any point $x_0$ for any $\alpha$ then $\abs{\mathcal{D}_{P,\sigma}(y)}$ is bounded by $\alpha+O(1)$ for all $y$ that $\abs{x_0-y}<O(1/n)$.
Also, Gaussians decay exponentially and hence $\abs{\mathcal{D}_{P,\sigma}(x)} <O(1)$ for any $x$ that $\abs{x-p}>\Omega(\sqrt{\log n})$ for all $p\in P$.

\emph{If} a coloring $\sigma$ satisfies that 
\[
\text{$\abs{\mathcal{D}_{P,\sigma}(x)} < \alpha$ for \emph{any} $x\in\mathbb{R}$ with probability at least $1-O(e^{-\Omega(\alpha^2)})$ for any $\alpha$} \numberthis\label{eq:property}
\]
then it implies, by union bound, this coloring $\sigma$ satisfies that 
\[
\text{$\abs{\mathcal{D}_{P,\sigma}(x)} < \alpha+O(1)$ for \emph{all} $x\in\mathbb{R}$ with probability at least $1-N\cdot O(e^{-\Omega(\alpha^2)})$}
\]
where $N$ is the number of grid points that are in the grid of cell width $\Omega(1/n)$ and lie around some point in $P$ within a radius of $O(\sqrt{\log n})$.
The number $N$ is bounded by $O(n^2\sqrt{\log n})$ because for each point $p\in P$ there are $O(\sqrt{\log n}/(1/n)) = O(n\sqrt{\log n})$ grid points around $p$ within a radius of $O(\sqrt{\log n})$ and there are $n$ points in $P$.
By setting $\alpha=O(\sqrt{\log n})$, we have 
\[
\text{$\abs{\mathcal{D}_{P,\sigma}(x)} < O(\sqrt{\log n})$ for \emph{all} $x\in\mathbb{R}$ with probability at least $1-1/10$}
\]
\emph{if} we manage to construct a coloring $\sigma$ satisfying \eqref{eq:property}.
Phillips and Tai \cite{phillips2019near} managed to construct such coloring $\sigma$ by Banaszczyk's Theorem and proved their result.
Namely, a coloring satisfying \eqref{eq:property} is construct-able.

\subparagraph{Attempts to improve the result}
We have seen how to show $\abs{\mathcal{D}_{P,\sigma}(x)} < O(\sqrt{\log n})$.
There is still a gap from showing $\abs{\mathcal{D}_{P,\sigma}(x)}<O(1)$.
We observe that the above argument aims at minimizing $\alpha$ such that the total failure probability $Ne^{-\Omega(\alpha^2)}$ is bounded by a constant.
If we manage to make the factor $N$ smaller, it helps setting $\alpha$ smaller and hence we can improve the result.

Recall that $N=O(n^2\sqrt{\log n}) = O(n\cdot n\sqrt{\log n})$ and the first factor $n$ comes from the fact that $P$ has $n$ points and these $n$ points could be widely spread out.
Namely, we need at most $n$ neighborhoods to cover all relevant grid points.
What if all points in $P$ lie inside a bounded region say $[-1,1]$? 
In this case, we just need to consider \emph{one} neighborhood to cover all relevant grid points.
Nonetheless, we do not assume that they are in a bounded region and we take care of it in the following way.
We partition $\mathbb{R}$ into infinitely many bounded regions (say $\dots, [-3,-1],[-1,1],[1,3],\dots$) and assign each point in $P$ to its corresponding region.
Then, we construct a coloring on the points in each bounded region and each coloring is constructed independently.
By triangle inequality, we have
\[
\abs{\mathcal{D}_{P,\sigma}(x)} \leq \sum \abs{\mathcal{D}_{P_i,\sigma_i}(x)} \numberthis \label{eq:breakd}
\]
where each $P_i\subset P$ is the set of points in the same bounded region and $\sigma_i$ is the coloring $\sigma$ restricted on $P_i$.

If we manage to construct the colorings $\sigma_i$ satisfying \eqref{eq:property} then we will end up getting $\abs{\mathcal{D}_{P,\sigma}(x)} < n_0 \cdot O(\alpha)$ where $n_0$ is the number of bounded regions that contain at least one point in $P$.
However, $n_0$ can be as large as $O(n)$.
To address this issue, we take the advantage of the assumption that all points in $P_i$ are in a bounded region (say $[-1,1]$).
Since all points in $P_i$ are in $[-1,1]$ now and Gaussians decay exponentially, intuitively we should be able to construct a coloring $\sigma_i$ that
\[
\text{$\abs{\mathcal{D}_{P_i,\sigma_i}(x)} < \alpha e^{-\frac{2}{3}x^2}$ for \emph{any} $x\in\mathbb{R}$ with probability at least $1-O(e^{-\Omega(\alpha^2)})$ for any $\alpha$}
\]
if a coloring satisfying \eqref{eq:property} is construct-able.
It is because we can rewrite $\abs{\mathcal{D}_{P_i,\sigma_i}(x)}$ as
\[
\abs{\mathcal{D}_{P_i,\sigma_i}(x)} =  \abs{\sum_{p\in P_i}\sigma_i(p)e^{-(x-p)^2}} = e^{-\frac{2}{3}x^2}\cdot\abs{\sum_{p\in P_i}\sigma_i(p)e^{2p^2}e^{-(\frac{1}{\sqrt{3}}x-\sqrt{3}p)^2}} \numberthis \label{eq:pullout}
\]
and the expression in the RHS has a form similar to $\mathcal{D}_{P_i,\sigma_i}(x)$.
The constant $\frac{2}{3}$ in the factor $e^{-\frac{2}{3}x^2}$ can be any constant between $0$ and $1$.
The extra factor $e^{-\frac{2}{3}x^2}$ is crucial: when we plug the bound $\alpha e^{-\frac{2}{3}x^2}$ into \eqref{eq:breakd}, $\abs{\mathcal{D}_{P,\sigma}(x)} $ is bounded by $O(1)\cdot O(\alpha)$ instead of $n_0\cdot O(\alpha)$.

One minor issue here is that the failure probability is accumulated when we ensure all $\sigma_i$ have the desired discrepancy.
We fix this issue by turning the construction of each $\sigma_i$ into a Las Vegas Algorithm.
Namely, we check if each $\sigma_i$ satisfies the desired discrepancy and repeat the construction if not.

Now, \emph{if} we manage to construct a coloring $\sigma$ such that: given $P\subset[-1,1]$, 
\[
\text{$\abs{\mathcal{D}_{P,\sigma}(x)} < \alpha e^{-\frac{2}{3}x^2}$ for \emph{any} $x\in \mathbb{R}$ with probability at least $1-O(e^{-\Omega(\alpha^2)})$ for any $\alpha$} \numberthis\label{eq:newproperty}
\]
then we only need to consider \emph{one} neighborhood to cover all relevant grid points when applying the union bound.
We also preview here that \eqref{eq:newproperty} is the only property a coloring needs to show our result.
From now on, we assume $P\subset[-1,1]$.
Even though \eqref{eq:newproperty} (the properties of the coloring $\sigma$ we are looking for) is slightly different than \eqref{eq:property} (what we stated in the beginning) because of the extra factor $e^{-\frac{2}{3}x^2}$, we can still perform a similar argument to prove that 
\[
\text{$\abs{\mathcal{D}_{P,\sigma}(x)} < O(\sqrt{\log n}) e^{-\frac{2}{3}x^2}$ for \emph{all} $x\in \mathbb{R}$ with probability at least $1-1/10$} \numberthis \label{eq:base}
\]
by arguing the slope of $\mathcal{D}_{P,\sigma}(x)$ is bounded by $O(n)e^{-\frac{2}{3}x^2}$ for any $x\in\mathbb{R}$.

\subparagraph{Reusing the guarantees for $\mathcal{D}_{P,\sigma}$}
Now, we look at the second factor $n\sqrt{\log n}$ in $N$.
It turns out that we are not going to make this factor smaller.
Instead, we will look at what guarantees this factor can give us and reuse these guarantees.

We further split $n\sqrt{\log n}$ into two parts: $n$ and $\sqrt{\log n}$.
Recall that the first part $n$ comes from the configuration that the cell width of the grid is $\Omega(1/n)$ and the second part $\sqrt{\log n}$ comes from the configuration that we need to consider the neighborhood of radius $O(\sqrt{\log n})$ to cover all relevant grid points.
However, we set up these two configurations \emph{without taking $\sigma$ into consideration}.
As we mentioned before, if we have a coloring $\sigma$ satisfying \eqref{eq:newproperty} then we have \eqref{eq:base}.
Can we reuse this guarantee and exploit the coloring $\sigma$?
To answer this question, we first investigate the term $\abs{\mathcal{D}_{P,\sigma}(x)-\mathcal{D}_{P,\sigma}(y)}$ for any $x,y\in\mathbb{R}$ and, by exploiting the structure of the Gaussians, we can prove
\[
\abs{\frac{\mathcal{D}_{P,\sigma}(x)-\mathcal{D}_{P,\sigma}(y)}{x-y}} < O(\abs{\xi})\cdot\abs{\mathcal{D}_{P,\sigma}(\xi)} \numberthis \label{eq:slope}
\]
for any $x\neq y$ where $\xi$ is in between $x$ and $y$.
\emph{The takeaway from this inequality is the slope of $\mathcal{D}_{P,\sigma}$ is bounded by $\mathcal{D}_{P,\sigma}$ itself.}
It is how we can reuse our guarantees.

If we plug our guarantee \eqref{eq:base} into \eqref{eq:slope}, we can show that the slope of $\mathcal{D}_{P,\sigma}(x)$ for this $\sigma$ is bounded by $O(\sqrt{\log n \log\log n}) e^{-\frac{2}{3}x^2}$ for any $x$ within a radius of $O(\sqrt{\log \log n})$.
For $x$ that lies beyond a radius of $\Omega(\sqrt{\log \log n})$, we have 
\[
\abs{\mathcal{D}_{P,\sigma}(x)} < O(\sqrt{\log n}) e^{-\frac{2}{3}x^2} <\frac{O(\sqrt{\log n})}{\Omega(\sqrt{\log n})} e^{-\frac{1}{3}x^2}<O(1) e^{-\frac{1}{3}x^2} < O(\sqrt{\log\log n}) e^{-\frac{1}{3}x^2} \numberthis \label{eq:outside}
\]
Note that the constant in the exponent becomes $\frac{1}{3}$ and it can be any constant smaller than $\frac{2}{3}$.
If we have a coloring $\sigma$ satisfying \emph{additionally} that $\abs{\mathcal{D}_{P,\sigma}(x)} < O(\sqrt{\log\log n}) e^{-\frac{2}{3}x^2}$ for \emph{all} $x$ in the set of grid points that are in the grid of cell width $\Omega(1/\sqrt{\log n})$ (instead of $\Omega(1/n)$) and bounded within a radius of $O(\sqrt{\log \log n})$ (instead of $O(\sqrt{\log n})$), then we have 
\[
\text{$\abs{\mathcal{D}_{P,\sigma}(x)} < O(\sqrt{\log\log n}) e^{-\frac{1}{3}x^2}$ for \emph{all} $x\in\mathbb{R}$.}
\]

There is a caveat: to ensure the coloring $\sigma$ satisfies the additional properties, we have to include more events in the union bound when invoking \eqref{eq:newproperty}.
In other words, the failure probability is now larger than $1/10$.
Nonetheless, we improved the previous result to $\abs{\mathcal{D}_{P,\sigma}(x)} < O(\sqrt{\log \log n}) e^{-\frac{1}{3}x^2}$.

\subparagraph{Hints of using induction}
From the improvement we just made, it gives us a hint to refine the quality of our result by induction.
One may notice the following pattern.
Suppose we have a coloring $\sigma$ satisfying
\[
\text{$\abs{\mathcal{D}_{P,\sigma}(x)} < \beta e^{-\kappa x^2}$ for \emph{all} $x\in \mathbb{R}$}\numberthis \label{eq:previousstep}
\]
for some $\beta$ where $\kappa$ is any constant between $0$ and $1$ (like $2/3$ before).
Let $S$ be the set of grid points that are in the grid of cell width $\Omega(1/\beta)$ and lie within a radius of $O(\sqrt{\log \beta})$.
Note that $\abs{S}=O(\beta\sqrt{\log \beta})$.
If this coloring $\sigma$ also satisfies that 
\[
\text{$\abs{\mathcal{D}_{P,\sigma}(x)} < O(\sqrt{\log \beta}) e^{-\kappa x^2}$ for \emph{all} $x\in S$} \numberthis \label{eq:inductivestepgrid}
\]
then we can modify the previous argument in the following way.
From \eqref{eq:previousstep} and \eqref{eq:slope}, we have the absolute value of the slope of $\mathcal{D}_{P,\sigma}$ at any point within a radius of $O(\sqrt{\log \beta})$ is bounded by $O(\beta\sqrt{\log \beta})e^{-\kappa x^2}$.
From an argument similar to \eqref{eq:outside}, we also have $\abs{\mathcal{D}_{P,\sigma}(x)} < O(\sqrt{\log \beta}) e^{-\kappa' x^2}$ for all $x$ that lies beyond a radius of $\Omega(\sqrt{\log \beta})$ where $\kappa'$ is any constant between $0$ and $\kappa$ (like $1/3$ before).
We combine them with \eqref{eq:inductivestepgrid} and it implies
\[
\text{$\abs{\mathcal{D}_{P,\sigma}(x)} < O(\sqrt{\log \beta}) e^{-\kappa'x^2}$ for \emph{all} $x\in \mathbb{R}$.} \numberthis \label{eq:inductivestep}
\]
If we take \eqref{eq:base} as the base step and the implication from \eqref{eq:previousstep} to \eqref{eq:inductivestep} as the inductive step, we should expect
\[
\text{$\abs{\mathcal{D}_{P,\sigma}(x)} < O(1) e^{-\frac{1}{3}x^2}$ for \emph{all} $x\in \mathbb{R}$.}
\]
after $O(\log^* n)$ inductive steps.

As we mentioned before, we also need to keep track of the failure probability and the exponent $\kappa$ in the factor $e^{-\kappa x^2}$.
We first deal with the failure probability.
In each inductive step, we need extra guarantees on the set of grid points of a smaller size (i.e. \eqref{eq:inductivestepgrid} when invoking \eqref{eq:newproperty}).
Hence, the total failure probability is a sum of $O(\log^* n)$ failure probabilities in each inductive step.
We can set these $O(\log^* n)$ failure probabilities to be a geometric sequence such that the total failure probability is a constant.
The other issue is the exponent.
We can again make this exponent decrease from $2/3$ to $1/3$ geometrically as it proceeds in the inductive steps.
In each inductive step, we need to set $\alpha$ in \eqref{eq:newproperty} larger than what we stated earlier accordingly when invoking \eqref{eq:newproperty} in the union bound.
Nonetheless, we eventually prove that $\abs{\mathcal{D}_{P,\sigma}(x)} < O(1)e^{-\frac{1}{3}x^2}$ for \emph{all} $x\in\mathbb{R}$ with probability $1/2$.

\subparagraph{Construction of the coloring}
It all boils down to the problem of how to construct a coloring $\sigma$ satisfying \eqref{eq:newproperty}.
Namely, given a point set $P\subset[-1,1]$,
\[
\text{$\abs{\mathcal{D}_{P,\sigma}(x)} < \alpha e^{-\frac{2}{3}x^2}$ for \emph{any} $x\in \mathbb{R}$ with probability at least $1-O(e^{-\Omega(\alpha^2)})$ for any $\alpha$.}
\]
We introduced Banaszczyk's Theorem (Theorem \ref{thm:b_thm}) before and if we can rewrite \eqref{eq:newproperty} as the inner product form shown in Theorem \ref{thm:b_thm} then we can apply the algorithm in Theorem \ref{thm:b_thm}.
As we mentioned in \eqref{eq:pullout}, we first rewrite
\[
\abs{\mathcal{D}_{P,\sigma}(x)} = \abs{\sum_{p\in P}\sigma(p)e^{-(x-p)^2}} = e^{-\frac{2}{3}x^2}\cdot\abs{\sum_{p\in P}\sigma(p)e^{2p^2}e^{-(\frac{1}{\sqrt{3}}x-\sqrt{3}p)^2}}.
\]
and hence we can ease the notation by dropping the factor $e^{-\frac{2}{3}x^2}$.
Namely, we need a coloring $\sigma$ such that, given a point set $P\subset[-1,1]$,
\[
\text{$\abs{\sum_{p\in P}\sigma(p)e^{2p^2}e^{-(\frac{1}{\sqrt{3}}x-\sqrt{3}p)^2}} < \alpha$ for \emph{any} $x\in \mathbb{R}$}
\]
with probability at least $1-O(e^{-\Omega(\alpha^2)})$ for any $\alpha$.
Since the Gaussian kernel is a positive-definite kernel, it implies that the term $e^{-(\frac{1}{\sqrt{3}}x-\sqrt{3}p)^2}$ can be rewritten as $\inner{u^{(\frac{1}{\sqrt{3}}x)}}{u^{(\sqrt{3}p)}}$ where $u^{(\cdot)}$ is a vector such that $\inner{u^{(s)}}{u^{(t)}} = e^{-(s-t)^2}$ for any $s,t\in \mathbb{R}$.
It is worth noting that $\norm{u^{(s)}}^2 = \inner{u^{(s)}}{u^{(s)}} = e^{-(s-s)^2} = 1$ for any $s\in\mathbb{R}$.
Hence, we further rewrite \eqref{eq:newproperty} as: given a point set $P\subset[-1,1]$,
\[
\text{$\abs{\inner{u^{(\frac{1}{\sqrt{3}}x)}}{\Sigma}} < \alpha$ for \emph{any} $x\in \mathbb{R}$ with probability at least $1-O(e^{-\Omega(\alpha^2)})$ for any $\alpha$}
\]
where $\Sigma = \sum_{p\in P} \sigma(p)e^{2p^2}u^{(\sqrt{3}p)}$.
It is the inner product form we are looking for in order to apply the algorithm in Theorem \ref{thm:b_thm}.
Recall that the norms of the input vectors and the query vectors in Banaszczyk's Theorem are required to be not larger than $1$.
We check that the norm of the query vector $\norm{u^{(\frac{1}{\sqrt{3}}x)}} = 1$ and the norm of the input vector $\norm{e^{2p^2}u^{(\sqrt{3}p)}} = O(1)$ since we assume that $P\subset[-1,1]$.
Karnin and Liberty \cite{karnin2019discrepancy} assumed \emph{both} the point set $P$ and the query $x$ lie within a constant radius because their result stops short of handling the norms of these vectors when using Banaszczyk's Theorem.
If we take $\frac{e^{2p^2}u^{(\sqrt{3}p)}}{\norm{e^{2p^2}u^{(\sqrt{3}p)}}}$ as the input vectors, we can apply the algorithm in Theorem \ref{thm:b_thm} to construct the desired coloring.

\section{Proofs}

In this section, we will show how to construct an $\eps$-coreset via discrepancy theory.
From now on, we assume that $d$ is a constant.
The $\log $ function in this paper is base $e$.
Also, we define the following notations.
Let $\Gr_d(\gamma)\subset\mathbb{R}^d$ be an infinite lattice grid of cell width $\gamma$, i.e. $\setdef{(\gamma i_1,\dots,\gamma i_d)}{i_1,\dots,i_d \text{ are integers}}$.
Denote $B_\infty^d(r) = \setdef{x}{\abs{x_j} < r\text{ for $j=1,\dots,d$}}$ to be a $\ell_\infty$-ball of radius $r$.
We define a decreasing sequence $n_i$ in the following way: $n_0 = \log^2 n$, $n_1 = \sqrt{3\log n}+3$ and $n_{i+1} = \sqrt{3\cdot 2^{\ell(n)-i}\log n_i}$ for $i=1,\dots,\ell(n) - 1$.
Here, $\ell(n)+3$ is the smallest integer $k$ that $\ilog(k, n)<0$ where $\ilog(k,n) = \log\cdots\log n$ (there are $k$ log functions) and it is easy to see that $\ell(n) = O(\log^* n)$.
For $i=0,\dots,\ell(n)-1$, denote $S_i = \Gr_d(\frac{1}{C_0n_i}) \cap [-n_{i+1},n_{i+1}]^d = \Gr_d(\frac{1}{C_0n_i}) \cap B_\infty^d(n_{i+1})$ where $C_0$ is a sufficiently large constant.
Namely, $S_i$ is a bounded lattice grid and its size is at most $(2C_0n_in_{i+1})^d$.
Note that $S_i$ may be interpreted as a subset of $S_0$ but, for clarity, we still view them as different sets.
Throughout this section, the absolute constants $C_0,C_1,C$ are unchanged.
Also, $C$ is larger than $C_1$ and $C_1$ is larger than $C_0$.

\subsection{Useful lemmas}\label{sec:useful}

Before we go into the main proof, we first present some important observations.
We delay all detailed proofs to the appendix.

\begin{restatable}{lemma}{lembaselipschitz}\label{lem:base_lipschitz}
	Suppose $P\subset B_\infty^d(1)$ be a point set of size $n$ and $\sigma$ is a coloring on $P$.
	Then, we have
	\[
	\sup_{x\in B_\infty^d(\sqrt{3\log n} + 3)}\abs{\sum_{p\in P} \sigma(p)e^{2\norm{p}^2} e^{-\frac{1}{3}\norm{x-3p}^2}} \leq 4\cdot\sup_{s\in S_0}\abs{\sum_{p\in P} \sigma(p)e^{2\norm{p}^2} e^{-\frac{1}{3}\norm{s-3p}^2}} + 7
	\]
	where $S_0 = \Gr_d(w) \cap [-\sqrt{3\log n}-3,\sqrt{3\log n} + 3]^d = \Gr_d(w) \cap B_\infty^d(\sqrt{3\log n}+3)$ with $w = \frac{1}{C_0\log^2 n}$.
	Here, $\Gr_d(\gamma) = \setdef{(\gamma i_1,\dots,\gamma i_d)}{i_1,\dots,i_d \text{ are integers}} \subset\mathbb{R}^d$ is an infinite lattice grid.
\end{restatable}

The main technique in Lemma \ref{lem:base_lipschitz} is expanding the expression by Taylor expansion.
Then, by truncating the Taylor expansion with a finite number of terms, one can bound the derivatives of the expression by using Markov Brother's inequality (Theorem \ref{lem:mb_ineq}).
Since the width of the grid cell in $S_0$ depends on the number of terms in the Taylor expansion, we need to argue that a small number of terms suffices to bound the error.

\begin{restatable}{lemma}{lemlipschitz}\label{lem:lipschitz}
	Given a coloring $\sigma$.
	For any $x,s \in  \mathbb{R}^d$ such that $\abs{x_j}\leq \abs{s_j}$ for all $j = 1,2,\dots,d$, we have
	\begin{align*}
		\MoveEqLeft \abs{\sum_{p\in P} \sigma(p) e^{-\norm{x-p}^2} - \sum_{p\in P} \sigma(p) e^{-\norm{s-p}^2} } \\
		& \leq 
		(\norm{s}^2-\norm{x}^2 )\abs{\sum_{p\in P} \sigma(p) e^{-\norm{x-p}^2}} + 2\sum_{j=1}^d\abs{s_j-x_j}\cdot\abs{\sum_{p\in P} \sigma(p) e^{-\norm{\xi^{(j)}-p}^2} }
	\end{align*}
	where $\xi^{(j)} = (x_1,\dots,x_{j-1},\xi_j,s_{j+1},\dots,s_d)$ for some $\xi_j$ between $\abs{x_j}$ and $\abs{s_j}$.
\end{restatable}

In the inductive step, the main observation is the absolute difference of the discrepancy objective at two different points, $\abs{\mathcal{D}_{P,\sigma}(x)-\mathcal{D}_{P,\sigma}(y)}$, can be bounded by the discrepancy objective itself.
Lemma \ref{lem:lipschitz} is the lemma providing the key inequality to perform the inductive steps.

Finally, we also show the asymptotic bound of the recurrence equation $n_i$ in Lemma \ref{lem:recurrence}.

\begin{restatable}{lemma}{lemrecurrence}\label{lem:recurrence}
	Let $n_0 = \log^2n$, $n_1 = \sqrt{3\log n} + 3$ and $n_{i+1} = \sqrt{3\cdot2^{\ell(n)-i}\log n_i}$ for $i =1,\dots, \ell(n)-1$.
	Then, $n_{\ell(n)} = O(1)$.
	Recall that $\ell(n)+3$ is the smallest integer $k$ that $\ilog(k,n) < 0$.
\end{restatable}

\subsection{Base step}\label{sec:base}

Recall that the definition of $\mathcal{D}_{P,\sigma}(x)$ is $\sum_{p\in P} \sigma(p)e^{-\norm{x-p}^2}$.
Lemma \ref{lem:base} shows that \emph{if} a coloring $\sigma$ satisfies that $\abs{\mathcal{D}_{P,\sigma}(x)}$ is small for all $x$ in a finite subset (which is a grid) of $\mathbb{R}^d$, then the coloring $\sigma$ also satisfies that $\abs{\mathcal{D}_{P,\sigma}(x)}$ is small for all $x\in\mathbb{R}^d$.
Note that we still haven't provided the detail on \emph{how} to find such coloring and we will do it in the full algorithm.

\begin{restatable}{lemma}{lembase}\label{lem:base}
	Suppose $P\subset B_\infty^d(1)$.
	Given a coloring $\sigma$ such that, for all $s'\in S_0 = \Gr_d(w) \cup B_\infty^d(n_1)$ where $w = \frac{1}{C_0\log^2 n} = \frac{1}{C_0n_0}$ is the same $w$ shown in Lemma \ref{lem:base_lipschitz} and $n_1 = \sqrt{3\log n} + 3$, 
	\[
	\abs{\mathcal{D}_{P,\sigma}(s')}=\abs{\sum_{p\in P} \sigma(p)e^{-\norm{s'-p}^2}} < C_1 n_1 e^{-\frac{2}{3}\norm{s'}^2}
	\]
	Then, we have, for all $x \in \mathbb{R}^d$,
	\[
	\abs{\mathcal{D}_{P,\sigma}(x)}=\abs{\sum_{p\in P}\sigma(p) e^{-\norm{x-p}^2}} < C n_1 e^{-\frac{2}{3}\norm{x}^2}.
	\]
	Here, $C,C_1,C_0$ are sufficiently large constant depending on $d$ only.
\end{restatable}

We make a short remark here.
One might notice that Lemma \ref{lem:base} states $w = \frac{1}{C_0\log^2 n}$ while it is sufficient to set $w = \Omega(\frac{1}{n})$ as suggested in Section \ref{sec:overview}.
As we mentioned before, our final algorithm is a Las Vegas algorithm and hence we need to check if the output coloring has the desired discrepancy.
We check it by enumerating the relevant grid points and computing the discrepancy at them.
Making $w$ larger reduces the size of the grid and hence improves the running time.
Nonetheless, $w = \frac{1}{C_0\log^2 n}=\Omega(\frac{1}{n})$ and hence it doesn't change the logic.

\subsection{Inductive step}\label{sec:inductive}

Lemma \ref{lem:inductive_lipschitz} suggests that \emph{if} a coloring $\sigma$ satisfies that $\abs{\mathcal{D}_{P,\sigma}(x)}$ is small for all $x\in\mathbb{R}^d$, then the coloring $\sigma$ satisfies that the absolute difference $\abs{\mathcal{D}_{P,\sigma}(x)-\mathcal{D}_{P,\sigma}(s)}$ is also small for any two close points $x,s$ within a certain region.
It is achieved by the observation that the slope of $\mathcal{D}_{P,\sigma}$ can be bounded by $\mathcal{D}_{P,\sigma}$ itself in magnitude.

\begin{restatable}{lemma}{leminductivelipschitz}\label{lem:inductive_lipschitz}
	Suppose $P \subset B_\infty^d(1)$.
	Let $D_i = C\cdot \frac{5}{4}(1-\frac{1}{5^i})$ and $I_i = \frac{1}{3}+\frac{1}{3}(1-\frac{1}{2^{\ell(n)-i}})$.
	Given a coloring $\sigma$ such that, for all $x\in\mathbb{R}^d$, 
	\[
	\abs{\mathcal{D}_{P,\sigma}(x)}=\abs{\sum_{p\in P} \sigma(p) e^{-\norm{x-p}^2}} < D_i\cdot n_ie^{-I_i\norm{x}^2}.
	\]
	If $x\in B_\infty^d(n_{i+1})$, then
	\[
	\abs{\mathcal{D}_{P,\sigma}(x)-\mathcal{D}_{P,\sigma}(s)}=\abs{\sum_{p\in P} \sigma(p) e^{-\norm{x-p}^2} - \sum_{p\in P} \sigma(p) e^{-\norm{s-p}^2} } \leq \frac{1}{5}D_i\cdot n_{i+1} e^{-I_i\norm{x}^2}.
	\]
	where $s\in S_i$ is the closest point to $x$ that $\abs{s_j}>\abs{x_j}$ for all $j=1,2,\dots,d$.
\end{restatable}

Similar to Lemma \ref{lem:base}, Lemma \ref{lem:inductive} shows that \emph{if} a coloring $\sigma$ satisfies that $\abs{\mathcal{D}_{P,\sigma}(x)}$ is small for all $x$ in a finite subset (which is a grid) of $\mathbb{R}^d$, then the coloring $\sigma$ also satisfies that $\abs{\mathcal{D}_{P,\sigma}(x)}$ is small for all $x\in\mathbb{R}^d$.
The only difference is that we can take the advantage of the discrepancy guarantee from the previous iterations.

\begin{restatable}{lemma}{leminductive}\label{lem:inductive}
	Suppose $P\subset B_\infty^d(1)$.
	Recall that $D_i = C\cdot \frac{5}{4}(1-\frac{1}{5^{i}})$ and $I_i = \frac{1}{3} + \frac{1}{3}(1-\frac{1}{2^{\ell(n)-i}})$ which is the same definition as in Lemma \ref{lem:inductive_lipschitz}.
	Given a coloring $\sigma$ such that, for all $s'\in S_i$,
	\[
	\abs{\sum_{p\in P} \sigma(p) e^{-\norm{s'-p}^2} } < C_1n_{i+1}e^{-\frac{2}{3}\norm{s'}^2}
	\]
	and, for all $x\in \mathbb{R}^d$,
	\[
	\abs{\sum_{p\in P} \sigma(p) e^{-\norm{x-p}^2}} \leq D_i\cdot n_{i} e^{-I_i\norm{x}^2}.
	\]
	Then, we have, for all $x\in\mathbb{R}^d$,
	\[
	\abs{\sum_{p\in P} \sigma(p) e^{-\norm{x-p}^2} } < D_{i+1}\cdot n_{i+1} e^{-I_{i+1}\norm{x}^2}.
	\]
	Here, $C, C_1$ are sufficiently large constants.
\end{restatable}

\subsection{Full algorithm}\label{sec:full}

For now, we still assume that $P \subset B_\infty^d(1)$.
Now, we can apply the algorithm in Theorem \ref{thm:b_thm} to construct our coloring $\sigma$ that produces a low discrepancy, $\abs{\mathcal{D}_{P,\sigma}(x)}$, for all $x\in\mathbb{R}^d$.
Recall that $\ell(n)+3$ is the smallest integer $k$ that $\ilog(k,n) < 0$.
Also, we defined $n_i$ before such that $n_0 = \log^2 n$, $n_1 = \sqrt{3\log n} + 3$ and $n_{i+1} = \sqrt{3\cdot 2^{\ell(n) - i} \log n_i}$.

\begin{restatable}{lemma}{lemboundeddisc} \label{lem:bounded_disc}
	Assuming $P \subset B_\infty^d(1)$.
	Given a set of vectors $V_P$ defined as follows.
	\[
	V_P=\setdef{\frac{1}{\sqrt{1+e^{4d}}} (
		\begin{matrix}
			1 \\
			v^{(p)}e^{2\norm{p}^2}
		\end{matrix}
		)}{p\in P}
	\]
	such that $\inner{v^{(p)}}{v^{(q)}} = e^{-3\norm{p-q}^2}$ for any $p,q\in P$.
	Then, by taking $V_P$ as the input, the algorithm in Theorem \ref{thm:b_thm} constructs a coloring $\sigma$ on $P$ such that 
	\[
	\abs{\sum_{p\in P} \sigma(p) e^{-\norm{x-p}^2} } < C\cdot \frac{5}{4}\cdot n_{\ell(n)} e^{-\frac{1}{3}\norm{x}^2}
	\]
	for all $x\in\mathbb{R}^d$ and $\abs{\sum_{p\in P} \sigma(p)} \leq C $ with probability at least $\frac{1}{2}$.  
\end{restatable}

Recall that we eventually would like to use the halving technique to construct our $\eps$-coreset.
To use the halving technique, we need to ensure that half of the points in $P$ are $+1$ and the other half are $-1$.
In Lemma \ref{lem:bounded_disc}, the $1$s concatenated on top of the vectors $v^{(p)}e^{2\norm{p}^2}$ in $V_P$ ensure the coloring has the above property.

\begin{restatable}{lemma}{lemboundeddiscbalance} \label{lem:bounded_disc_balance}
	Assuming $P \subset B_\infty^d(1)$.
	There is an efficient algorithm that constructs a coloring $\sigma$ such that $\abs{\sum_{p\in P} \sigma(p) e^{-\norm{x-p}^2} } = O(n_{\ell(n)} e^{-\frac{1}{3}\norm{x}^2})$ for all $x\in\mathbb{R}^d$ and half of points are assigned $+1$ with probability at least $\frac{1}{2}$.  
\end{restatable}

We can now remove the assumption of $P\subset B_\infty^d(1)$.
Algorithm \ref{alg:main_g} is a Las Vegas algorithm that constructs a coloring on the input point set $P$.
We can now show how to construct a coloring such that the discrepancy is small.
Recall that we defined $\Gr_d(\gamma) = \setdef{(\gamma i_1,\dots,\gamma i_d)}{i_1,\dots,i_d \text{ are integers}} \subset\mathbb{R}^d$ to be an infinite lattice grid.
The idea of Algorithm \ref{alg:main_g} is that we first decompose the entire $\mathbb{R}^d$ into infinitely many $\ell_\infty$-balls of radius $1$.
Then, we partition our input $P$ such that each point $p\in P$ lies in some $\ell_\infty$-ball.
For each non-empty $\ell_\infty$-ball, run the algorithm in Theorem \ref{thm:b_thm} to construct a coloring with the desired discrepancy by Lemma \ref{lem:bounded_disc_balance}.
Finally, we argue that there is an extra constant factor in the final discrepancy.

\begin{algorithm}[!h]
	\caption{Construction of the coloring}
	{\bf input}: a point set $P\subset \mathbb{R}^d$
	\begin{algorithmic}[1]
		\STATE initialize $Q_g=\emptyset$ for all $g\in \Gr_d(2)$
		\FOR {each $p\in P$}
		\STATE insert $p$ into $Q_g$ where $g\in\Gr_d(2)$ is the closest point to $p$.
		\ENDFOR
		\FOR {each non-empty $Q_g$}
		\STATE construct a collection $V_g$ of vector $\setdef{\frac{1}{\sqrt{1+e^{4d}}} (
			\begin{matrix}
				1 \\
				v^{(p)}e^{2\norm{p}^2}
			\end{matrix}
			)}{p\in Q_g}$ such that $\inner{v^{(p)}}{v^{(q)}} = e^{-3\norm{p-q}^2}$ for any $p,q\in Q_g$ \label{line:construct_input}
		\STATE use $V_g$ as the input and run the algorithm in Theorem \ref{thm:b_thm} to obtain a coloring $\sigma_g$ on $Q_g$ \label{line:main_step}
		\STATE check if $\sigma_g$ satisfies the conditions in Lemma \ref{lem:base} and Lemma \ref{lem:inductive} and repeat line \ref{line:main_step} if not \label{line:check}
		\STATE flip the color of any points such that half of points in $Q_g$ are colored $+1$.\label{line:flip}
		\ENDFOR
		\STATE \textbf{return} a coloring $\sigma: P  \rightarrow \{-1,+1\}$ such that $\sigma(p) = \sigma_g(p)$ when $p\in Q_g$
	\end{algorithmic}
	\label{alg:main_g}
\end{algorithm}

\begin{restatable}{lemma}{lemfull}
	Suppose $P\subset \mathbb{R}^d$ be a point set of size $n$.
	Then, Algorithm \ref{alg:main_g} constructs a coloring $\sigma$ on $P$ efficiently such that  $\sup_{x\in\mathbb{R}^d}\abs{\sum_{p\in P} \sigma(p)e^{-\norm{x-p}^2}} = O (1 )$ and half of the points in $P$ are colored $+1$.
\end{restatable}

One can first perform random sampling \cite{lopez2015towards} before running Algorithm \ref{alg:main_g} such that the input size $n = O(\frac{1}{\eps^2})$.
Finally, by the standard halving technique, we have the following theorem.

\begin{theorem}[Restated Theorem \ref{thm:main_g}]
	Suppose $P\subset \mathbb{R}^d$ be a point set of size $n$.
	Let $\Gb_P$ be the Gaussian kernel density estimate of $P$, i.e. $\Gb_P(x) = \frac{1}{\abs{P}}\sum_{p\in P} e^{-\norm{x-p}^2}$ for any $x\in\mathbb{R}^d$.
	For a fixed constant $d$, there is an algorithm that constructs a subset $Q\subset P$ of size $O (\frac{1}{\eps} )$ such that $\sup_{x\in\mathbb{R}^d}\abs{\Gb_P(x) - \Gb_Q(x)} < \eps$ and has a polynomial running time in $n$.
\end{theorem}

\section{Conclusion and discussion}

In this paper, we studied the question of constructing coresets for kernel density estimates.
We proved that the Gaussian kernel has an $\eps$-coreset of the optimal size $O (1/\eps )$ when $d$ is a constant.
This coreset can be constructed efficiently.
We leveraged Banaszczyk's Theorem to construct a coloring such that the kernel discrepancy is small.
Then, we constructed an $\eps$-coreset of the desired size via the halving technique.

Some open problems in discrepancy theory, such as Tusn{\'a}dy's Problem, have an issue that an extra factor shows up when we generalize the result from the case of $d=1$ to the case of larger $d$.
A previous result of our problem is reducing our problem to Tusn{\'a}dy's Problem.
It turns out that, if $d=1$, the trivial solution gives the optimal result.
Unfortunately, it cannot be generalized to the higher dimensional case.
Our new induction analysis combining with Banaszczyk's Theorem provides a non-trivial perspective even when $d=1$.
Hence, it might open up a possibility of improving the results on these open problems.

Even though the Gaussian kernel is a major class of kernels in most of applications, it would be interesting to investigate similar results on other kernel settings such as the Laplace kernel.
Our approach exploits the properties of the Gaussian kernel such as factoring out the $e^{-\Omega(\norm{x}^2)}$ factor while maintaining the positive-definiteness.
Generalizing the result to a broader class of kernels might require deeper understandings of the properties that other kernels share with the Gaussian kernel.

In some applications, the input data might be in the high dimensional space.
Our result assumes that $d$ is a constant.
Note that one of the previous results is sub-optimal in terms of $\eps$ but is optimal in terms of $d$.
The dependence on $d$ in our result is exponential which we might want to avoid in the high dimensional case.
Hence, improving the dependence on $d$ to polynomial is also interesting because it would be more practical in some applications.

%%
%% Bibliography
%%

%% Please use bibtex, 
\bibliographystyle{plain}
\bibliography{ref}

\appendix

\section{Omitted proofs}

\subsection{Proofs in Section \ref{sec:useful}}

\lembaselipschitz*

\begin{proof}
	
	We first use Taylor expansion to have the following expression.
	\begin{align*}
		\MoveEqLeft\sum_{p\in P} \sigma(p)e^{2\norm{p}^2} e^{-\frac{1}{3}\norm{x-3p}^2} \\
		& =
		\sum_{p\in P} \sigma(p)e^{-\norm{p}^2}e^{-\frac{1}{3}\norm{x}^2} e^{2\inner{x}{p}}\\
		& =
		\sum_{p\in P} \sigma(p)e^{-\norm{p}^2}e^{-\frac{1}{3}\norm{x}^2}\sum_{k=0}^\infty \frac{(2\inner{x}{p})^k}{k!} \\
		& =
		\sum_{k=0}^\infty\inner{\sum_{p\in P} \sigma(p)e^{-\norm{p}^2}\sqrt{\frac{2^k}{k!}}p^{\otimes k} }{e^{-\frac{1}{3}\norm{x}^2}\sqrt{\frac{2^k}{k!}}x^{\otimes k}} \numberthis\label{eqn:disc_taylor}
	\end{align*}
	Here, $\otimes$ is Kronecker product.
	Namely, for any two vectors $a^{(1)}\in \mathbb{R}^{d_1}$ and $a^{(2)}\in\mathbb{R}^{d_2}$, $a^{(1)}\otimes a^{(2)}$ is a $d_1\cdot d_2$ dimensional vector indexed by two integers $i,j$ for $i=1,\dots,d_1$ and $j=1,\dots,d_2$ such that $(a^{(1)}\otimes a^{(2)})_{i,j} = a^{(1)}_i \cdot a^{(2)}_j$.
	Also, for any $a\in \mathbb{R}^d$ and any integer $k$, $a^{\otimes k} = a\otimes a^{\otimes k-1}$ and $a^{\otimes 0} = 1$.
	
	Now, consider polynomial $\mathcal{P}_{P,\rho}(x) = \sum_{k=0}^\rho\inner{\sum_{p\in P} \sigma(p)e^{-\norm{p}^2}\sqrt{\frac{2^k}{k!}}p^{\otimes k} }{\sqrt{\frac{2^k}{k!}}x^{\otimes k}}$.
	We first observe that $e^{-\frac{1}{3}\norm{x}^2}\mathcal{P}_{P,\rho}(x)$ is the same as expression (\ref{eqn:disc_taylor}) but is truncated at the $\rho$-th term.
	We analyze the term $\abs{\sum_{p\in P} \sigma(p)e^{2\norm{p}^2} e^{-\frac{1}{3}\norm{x-3p}^2} - e^{-\frac{1}{3}\norm{x}^2}\mathcal{P}_{P,\rho}(x)}$ for any $x\in B_\infty^d(\sqrt{3\log n}+3)$.
	\allowdisplaybreaks
	\begin{align*}
		\MoveEqLeft \abs{\sum_{p\in P} \sigma(p)e^{2\norm{p}^2} e^{-\frac{1}{3}\norm{x-3p}^2} - e^{-\frac{1}{3}\norm{x}^2}\mathcal{P}_{P,\rho}(x)}\\ 
		& =
		\abs{\sum_{k=\rho+1}^\infty\inner{\sum_{p\in P} \sigma(p)e^{-\norm{p}^2}\sqrt{\frac{2^k}{k!}}p^{\otimes k} }{e^{-\frac{1}{3}\norm{x}^2}\sqrt{\frac{2^k}{k!}}x^{\otimes k}}} \\
		& \leq
		\sum_{k=\rho+1}^\infty\frac{2^k}{k!}\sum_{p\in P} e^{-\norm{p}^2}e^{-\frac{1}{3}\norm{x}^2}\abs{\inner{p^{\otimes k} }{x^{\otimes k}}} \\
		& \leq
		\sum_{k=\rho+1}^\infty\frac{2^k}{k!}\sum_{p\in P} e^{-\norm{p}^2}e^{-\frac{1}{3}\norm{x}^2}\norm{p}^{k} \norm{x}^{ k} \\
		& =
		\sum_{p\in P}e^{-\norm{p}^2}e^{-\frac{1}{3}\norm{x}^2}\sum_{k=\rho+1}^\infty\frac{(2\norm{x}\norm{p})^k}{k!}
	\end{align*}
	By the error approximation of Taylor expansion of exponential function, we have $\sum_{k=\rho+1}^\infty\frac{y^k}{k!} \leq  (\sup_{\xi\in [-y, y]}e^\xi )\frac{y^{\rho+1}}{(\rho+1)!} = e^{y}\frac{y^{\rho+1}}{(\rho+1)!}$ where $y = 2\norm{x}\norm{p}$.
	Then,
	\begin{align*}
		\MoveEqLeft \abs{\sum_{p\in P} \sigma(p)e^{2\norm{p}^2} e^{-\frac{1}{3}\norm{x-3p}^2} - e^{-\frac{1}{3}\norm{x}^2}\mathcal{P}_{P,\rho}(x)} \\
		& \leq
		\sum_{p\in P}e^{-\norm{p}^2}e^{-\frac{1}{3}\norm{x}^2}e^{2\norm{x}\norm{p}}\frac{(2\norm{x}\norm{p})^{\rho+1}}{(\rho+1)!}\\
		& = 
		\sum_{p\in P}e^{2\norm{p}^2-\frac{1}{3}(3\norm{p}-\norm{x})^2}\frac{(2\norm{x}\norm{p})^{\rho+1}}{(\rho+1)!} \\
		& \leq
		\sum_{p\in P} e^{2d} \frac{(2\cdot \sqrt{d}\cdot \sqrt{d}(\sqrt{3\log n}+3))^{\rho+1}}{(\rho+1)!} \\
		& \leq
		\sum_{p\in P} e^{2d}  (\frac{2ed(\sqrt{3\log n}+3)}{\rho+1} )^{\rho+1}
	\end{align*}
	The last step is due to the fact of $z! \geq (\frac{z}{e})^{z}$.
	Now, if taking $\rho+1 = 2e^2d(\sqrt{3\log n}+3) + \log n + 2d$ which means $\rho = O(\log n)$, we have 
	\begin{align*}
		\MoveEqLeft \abs{\sum_{p\in P} \sigma(p)e^{2\norm{p}^2} e^{-\frac{1}{3}\norm{x-3p}^2} - e^{-\frac{1}{3}\norm{x}^2}\mathcal{P}_{P,\rho}(x)} \\
		& \leq
		\sum_{p\in P}e^{2d}(\frac{1}{e})^{\rho+1}
		\leq
		ne^{2d}(\frac{1}{e})^{\log n +2d}
		=
		1 \numberthis \label{eqn:truncate_error}
	\end{align*}
	
	Let $x^*$ be $\arg\sup_{x\in B_\infty^d(\sqrt{3\log n}+3)}\abs{\sum_{p\in P} \sigma(p)e^{2\norm{p}^2} e^{-\frac{1}{3}\norm{x-3p}^2}}$.
	Without loss of generality, we assume that all $x^*_i>0$ for all $i=1,\dots,d$.
	We construct an uniform grid $S_0$ of width $w=\frac{1}{C_0\log^2 n}<\frac{1}{8d\rho^2e^{d} + 2 d(\sqrt{3\log n}+3)/\log 2}$ on $B_\infty^d(\sqrt{3\log n}+3)$.
	Namely, $S_0 = \Gr_d(w) \cap B_\infty^d(\sqrt{3\log n}+3)$.
	Let $s$ be the closest grid point in $\mathsf{G}_w$ to $x^*$ that all coordinate of $s$ are larger than $x^*$.
	
	We have
	\[
	\abs{\mathcal{P}_{P,\rho}(x^*) - \mathcal{P}_{P,\rho}(s)}
	=
	\abs{\inner{\nabla\mathcal{P}_{P,\rho}(\xi)}{x^* - s}}
	\]
	for some $\xi$ on the segment between $x^*$ and $s$.
	Here, $\nabla$ is the gradient.
	Then, 
	\begin{align*}
		\abs{\mathcal{P}_{P,\rho}(x^*) - \mathcal{P}_{P,\rho}(s)}
		& =
		\abs{\inner{\nabla\mathcal{P}_{P,\rho}(\xi)}{x^* - s}} \\
		& \leq 
		\sum_{i=1}^d \abs{\frac{\partial\mathcal{P}_{P,\rho}}{\partial x_i}(\xi)}\abs{x^*_i - s_i} \\
		& \leq
		\sum_{i=1}^d 2\rho^2\abs{\mathcal{P}_{P,\rho}(x')}\abs{x^*_i - s_i}
	\end{align*}
	where $x' = \arg\max_{x\in\prod_{i=1}^d[s_i-1,s_i]}\mathcal{P}_{P,\rho}(x)$.
	The last step is due to Markov brother's inequality (Theorem \ref{lem:mb_ineq}).
	By rearranging the terms and multiplying $e^{-\norm{g}^2}$, we have
	\[
	e^{-\norm{s}^2}\abs{\mathcal{P}_{P,s}(x^*)} \leq         e^{-\norm{s}^2}\abs{\mathcal{P}_{P,s}(s)} +  (\sum_{i=1}^d 2s^2\abs{x^*_i - s_i} )e^{-\norm{s}^2}\abs{\mathcal{P}_{P,s}(x')}
	\]
	
	For the index $i$ that $s_i\geq \frac{1}{2}$, we have $x'_i\leq s_i$ and $-x'_i\leq -s_i+1 \leq \frac{1}{2}\leq s_i$, which means $\abs{x'_i} \leq \abs{s_i}$.
	For the index $i$ that $s_i < \frac{1}{2}$, recalling that we assume $x^*_i>0$ and so $s_i>0$, we have $\abs{x'_i} \leq 1-s_i \leq 1$. 
	Combining the above two conclusions, we have $\abs{x'_i}^2 - \abs{s_i}^2 \leq 1$ for all $i=1,\dots,d$.
	\begin{align*}
		e^{-\norm{s}^2}\abs{\mathcal{P}_{P,\rho}(x^*)} 
		&\leq 
		e^{-\norm{s}^2}\abs{\mathcal{P}_{P,\rho}(s)} +  (\sum_{i=1}^d 2\rho^2\abs{x^*_i - s_i} )e^{-\norm{s}^2}\abs{\mathcal{P}_{P,\rho}(x')} \\
		& \leq
		e^{-\norm{s}^2}\abs{\mathcal{P}_{P,\rho}(s)} +  (\sum_{i=1}^d 2\rho^2\abs{x^*_i - s_i} )e^{\norm{x'}^2-\norm{s}^2}e^{-\norm{x'}^2}\abs{\mathcal{P}_{P,\rho}(x')} \\
		& \leq
		e^{-\norm{s}^2}\abs{\mathcal{P}_{P,\rho}(s)} +  2\rho^2dwe^{d}e^{-\norm{x'}^2}\abs{\mathcal{P}_{P,s}(x')} \\
		& \leq
		e^{-\norm{s}^2}\abs{\mathcal{P}_{P,s}(s)} + \frac{1}{4}e^{-\norm{x'}^2}\abs{\mathcal{P}_{P,s}(x')} 
	\end{align*}
	In last step, recall that $w < \frac{1}{8d\rho^2e^{d} + 2 d(\sqrt{3\log n}+3)/\log 2} \leq \frac{1}{8d\rho^2e^{d}}$.
	The left hand side is basically $e^{-\norm{s}^2}\abs{\mathcal{P}_{P,\rho}(x^*)} = e^{-(\norm{s}^2-\norm{x^*}^2)}e^{-\norm{x^*}^2}\abs{\mathcal{P}_{P,\rho}(x^*)}$.
	We first analyze the term  $\norm{s}^2 - \norm{x^*}^2$.
	Recall that $w< \frac{1}{8d\rho^2e^{d} + 2 d(\sqrt{3\log n}+3)/\log 2} \leq \frac{1}{2 d(\sqrt{3\log n}+3)/\log 2}$.
	\begin{align*}
		\norm{s}^2 - \norm{x^*}^2
		& = 
		\inner{s-x^*}{s+x^*} \leq \sum_{i=1}^d\abs{s_i-x^*_i}\abs{s_i+x^*_i} \\
		& \leq 
		2(\sqrt{3\log n}+3)\sum_{i=1}^d\abs{s_i-x^*_i} \\
		& \leq 
		2(\sqrt{3\log n}+3)dw \\
		& \leq
		\log 2
	\end{align*}
	which means $e^{-\norm{s}^2}\abs{\mathcal{P}_{P,\rho}(x^*)} \geq \frac{1}{2}e^{-\norm{x^*}^2}\abs{\mathcal{P}_{P,\rho}(x^*)}$.
	Therefore, we have 
	\[
	\frac{1}{2}e^{-\norm{x^*}^2}\abs{\mathcal{P}_{P,\rho}(x^*)} \leq e^{-\norm{s}^2}\abs{\mathcal{P}_{P,\rho}(s)} + \frac{1}{4}e^{-\norm{x'}^2}\abs{\mathcal{P}_{P,\rho}(x')} 
	\]
	and then, from (\ref{eqn:truncate_error}), 
	\begin{align*}
		&
		\frac{1}{2} (\abs{\sum_{p\in P} \sigma(p)e^{2\norm{p}^2} e^{-\frac{1}{3}\norm{x^*-3p}^2}} - 1 ) \\
		& \leq 
		(\abs{\sum_{p\in P} \sigma(p)e^{2\norm{p}^2} e^{-\frac{1}{3}\norm{s-3p}^2}} + 1 ) + \frac{1}{4} (\abs{\sum_{p\in P} \sigma(p)e^{2\norm{p}^2} e^{-\frac{1}{3}\norm{x'-3p}^2}} + 1 ) \\
		& \leq
		(\abs{\sum_{p\in P} \sigma(p)e^{2\norm{p}^2} e^{-\frac{1}{3}\norm{s-3p}^2}} + 1 ) + \frac{1}{4} (\abs{\sum_{p\in P} \sigma(p)e^{2\norm{p}^2} e^{-\frac{1}{3}\norm{x^*-3p}^2}} + 1 )
	\end{align*}
	or
	\[
	\abs{\sum_{p\in P} \sigma(p)e^{2\norm{p}^2} e^{-\frac{1}{3}\norm{x^*-3p}^2}}  \leq 4\abs{\sum_{p\in P} \sigma(p)e^{2\norm{p}^2} e^{-\frac{1}{3}\norm{s-3p}^2}} + 7 \qedhere
	\]

\end{proof}

\lemlipschitz*

\begin{proof}
	
	We first express the term in the following form. 
	\begin{align*}
		\MoveEqLeft \abs{\sum_{p\in P} \sigma(p) e^{-\norm{x-p}^2} - \sum_{p\in P} \sigma(p) e^{-\norm{s-p}^2} } \\
		& = 
		\abs{\sum_{p\in P} \sigma(p)  (e^{-\norm{x-p}^2} - e^{-\norm{s}^2+2\inner{x}{p}-\norm{p}^2} + e^{-\norm{s}^2+2\inner{x}{p}-\norm{p}^2} - e^{-\norm{s-p}^2} ) } \\
		& \leq
		\abs{\sum_{p\in P} \sigma(p)  (e^{-\norm{x-p}^2} - e^{-\norm{s}^2+2\inner{x}{p}-\norm{p}^2}  )} \\
		& \qquad
		+ \abs{\sum_{p\in P} \sigma(p)  ( e^{-\norm{s}^2+2\inner{x}{p}-\norm{p}^2} - e^{-\norm{s-p}^2}  ) } \\
		& =
		(e^{-\norm{x}^2} - e^{-\norm{s}^2} )\abs{\sum_{p\in P} \sigma(p) e^{2\inner{x}{p}-\norm{p}^2}} \\
		& \qquad
		+ e^{-\norm{s}^2} \abs{\sum_{p\in P} \sigma(p)  ( e^{2\inner{x}{p}-\norm{p}^2} - e^{2\inner{s}{p}-\norm{p}^2}  ) }
	\end{align*}
	By mean value theorem, we have $e^{2\inner{x}{p}-\norm{p}^2} - e^{2\inner{s}{p}-\norm{p}^2} = \sum_{j=1}^d 2(x_j-s_j)e^{2\inner{\xi^{(j)}}{p}-\norm{p}^2}$ where $\xi^{(j)} = (x_1,\dots,x_{j-1},\xi_j,s_{j+1},\dots,s_d)$ for some $\xi_j$ between $\abs{x_j}$ and $\abs{s_j}$.
	Then, 
	\begin{align*}
		\MoveEqLeft \abs{\sum_{p\in P} \sigma(p) e^{-\norm{x-p}^2} - \sum_{p\in P} \sigma(p) e^{-\norm{s-p}^2} } \\
		& =
		(e^{-\norm{x}^2} - e^{-\norm{s}^2} )\abs{\sum_{p\in P} \sigma(p) e^{2\inner{x}{p}-\norm{p}^2}} \\
		& \qquad
		+ e^{-\norm{s}^2}\abs{\sum_{p\in P} \sigma(p)\sum_{j=1}^d 2(x_j-s_j)e^{2\inner{\xi^{(j)}}{p}-\norm{p}^2} } \\
		& \leq
		(1 - e^{\norm{x}^2-\norm{s}^2} )\abs{\sum_{p\in P} \sigma(p) e^{-\norm{x-p}^2}} \\
		& \qquad
		+ 2\sum_{j=1}^d\abs{s_j-x_j}e^{-\norm{s}^2+\norm{\xi^{(j)}}^2}\abs{\sum_{p\in P} \sigma(p) e^{-\norm{\xi^{(j)}-p}^2} } \\
		& \leq
		(\norm{s}^2-\norm{x}^2 )\abs{\sum_{p\in P} \sigma(p) e^{-\norm{x-p}^2}} + 2\sum_{j=1}^d\abs{s_j-x_j}\abs{\sum_{p\in P} \sigma(p) e^{-\norm{\xi^{(j)}-p}^2} }
	\end{align*}
	The last line follows from the fact of $1+y\leq e^y$ and $\norm{\xi^{(j)}} \leq \norm{s}$.
	
\end{proof}

\lemrecurrence*

\begin{proof}
	
	We first show that $2^{\ell(n)-i} < \ilog(i+1,n)$.
	Since $2^{\ell(n)-i} < e^{\ell(n)-i}$, it suffices to show that $\ell(n)-i < \ilog(i+2, n)$.
	Suppose it is not the case.
	Namely, $\ilog(i+2, n) > \ell(n)-i$.
	For any $x>0$, we have $1+\log x < e^{\log x} = x$ or, equivalently, $\log x < x-1$.
	By induction, it implies that, for any positive integer $t$, $\ilog(t,x) = \log(\ilog(t-1,x)) < \ilog(t-1,x)-1 < x-t$ as long as $\ilog(t-1,x) >0$.
	If we take $t = \ell(n)-i$ and $x = \ilog(i+2,n)$, then we have 
	\[
	\ilog(\ell(n)+2, n) = \ilog(\ell(n)-i,\ilog(i+2,n)) < \ilog(i+2,n) - (\ell(n) - i)<0.
	\]
	However, by definition, $\ell(n)+3$ is the smallest integer $k$ that $\ilog(k,n) < 0$ which means $\ilog(\ell(n)+2, n) > 0$.
	We can conclude that $2^{\ell(n)-i} < \ilog(i+1,n)$.
	
	Consider $n_1$, $n_1 = \sqrt{3\log n}+3 < \sqrt{3}\log n$ for large $n$.
	Suppose $n_{i} < \sqrt{6(1-\frac{1}{2^i})}\ilog(i,n)$ for $i  =1,\dots, \ell(n)-1$.
	We can show that 
	\begin{align*}
		n_{i+1} 
		& = 
		\sqrt{3\cdot 2^{\ell(n)-i}\log n_i} \\
		& \leq
		\sqrt{3\ilog(i+1,n)\log (\sqrt{6(1-\frac{1}{2^i})}\ilog(i,n) )} \\
		& < 
		\sqrt{3\ilog(i+1,n) (\frac{1}{2}\log(6(1-\frac{1}{2^i}))+\ilog(i+1,n) )} \\
		& <
		\sqrt{3\ilog(i+1,n) ((1-\frac{1}{2^i})\log 3+\ilog(i+1,n) )}\\
		& <
		\sqrt{3\ilog(i+1,n) ((1-\frac{1}{2^i})\ilog(i+1,n)+\ilog(i+1,n) )}\\
		& =
		\sqrt{6(1-\frac{1}{2^{i+1}})}\ilog(i+1,n)
	\end{align*}
	In the fourth line, we uses the fact that $\log(6(1-\frac{1}{2^i})) < 2(1-\frac{1}{2^i})\log 3$ and, in the fifth line, we uses the fact that $\log 3 < e^e< \ilog(\ell(n),n) \leq \ilog(i+1,n)$.
	In particular, we conclude 
	\[
	n_{\ell(n)} <\sqrt{6(1-\frac{1}{2^{\ell(n)}})}\ilog(\ell(n),n) < \sqrt{6}e^{e^e}
	\]
	
\end{proof}

\subsection{Proofs in Section \ref{sec:base}}

\lembase*

\begin{proof}
	
	We first have the following expression.
	\begin{align*}
		\abs{\sum_{p\in P} \sigma(p) e^{-\norm{x-p}^2}}
		& =
		\abs{\sum_{p\in P} \sigma(p) e^{-\frac{2}{3}\norm{x}^2-\frac{1}{3}\norm{x-3p}^2+2\norm{p}^2}} \\
		& =
		e^{-\frac{2}{3}\norm{x}^2}\abs{\sum_{p\in P} \sigma(p) e^{-\frac{1}{3}\norm{x-3p}^2}e^{2\norm{p}^2}} 
	\end{align*}
	
	If $x\notin B_\infty^d(\sqrt{3\log n}+3)$, which means one of the coordinate $\abs{x_j}>n_1 = \sqrt{3\log n}+3$, then
	\[
	\abs{\sum_{p\in P} \sigma(p)e^{-\norm{x-p}^2}} = e^{-\frac{2}{3}\norm{x}^2}\abs{\sum_{p\in P} \sigma(p) e^{-\frac{1}{3}\norm{x-3p}^2}e^{2\norm{p}^2}} < e^{-\frac{2}{3}\norm{x}^2}\sum_{p\in P} e^{-\frac{1}{3}\norm{x-3p}^2}e^{2\norm{p}^2}
	\]
	by triangle inequality.
	Since we assume that $\norm{p}<1$, the term $e^{2\norm{p}^2}$ is less than $e^{2d^2}$.
	Also, by the fact that $\norm{p}<1$ and $\abs{x_j} > \sqrt{3\log n}+3$, we have $\norm{x-3p} > \abs{x_j - 3p_j} > \abs{x_j} - 3\abs{p_j} > \sqrt{3\log n}$ which implies $e^{-\frac{1}{3}\norm{x-3p}^2} < \frac{1}{n}$.
	Now,
	\begin{align*}
		\abs{\sum_{p\in P} \sigma(p)e^{-\norm{x-p}^2}} 
		& < 
		e^{-\frac{2}{3}\norm{x}^2}\sum_{p\in P} e^{-\frac{1}{3}\norm{x-3p}^2}e^{2\norm{p}^2} 
		<
		e^{-\frac{2}{3}\norm{x}^2}\sum_{p\in P} \frac{1}{n}e^{2d^2} \\
		& =
		e^{2d^2}e^{-\frac{2}{3}\norm{x}^2} 
		%		 <
		%		C\sqrt{\log\log n}e^{-\frac{2}{3}\norm{x}^2}
		<
		Cn_1e^{-\frac{2}{3}\norm{x}^2}
	\end{align*}
	The last inequality follows from the assumption that $C$ is a sufficiently large constant (say at least $e^{2d^2}$).
	
	If $x\in B_\infty^d(\sqrt{3\log n}+3)$, by the assumption of $\sigma$, we rewrite the inequality to be , for all $s'\in S_0$, 
	\[
	\abs{\sum_{p\in P} \sigma(p)e^{2\norm{p}^2}e^{-\frac{1}{3}\norm{s'-3p}^2}} < C_1 n_1
	\]
	Now, by Lemma \ref{lem:base_lipschitz}, we have, for all $x\in \mathbb{R}^d$,
	\begin{align*}
		\abs{\sum_{p\in P} \sigma(p) e^{-\norm{x-p}^2}} 
		& =
		e^{-\frac{2}{3}\norm{x}^2}\abs{\sum_{p\in P} \sigma(p)e^{2\norm{p}^2} e^{-\frac{1}{3}\norm{x-3p}^2}} \\
		& \leq
		e^{-\frac{2}{3}\norm{x}^2} (4\sup_{s\in S_0}\abs{\sum_{p\in P} \sigma(p)e^{2\norm{p}^2} e^{-\frac{1}{3}\norm{s-3p}^2}} + 7 ) \\
		& \leq
		e^{-\frac{2}{3}\norm{x}^2} (4C_1n_1 + 7 ) \\
		& \leq
		C\cdot n_1e^{-\frac{2}{3}\norm{x}^2}
	\end{align*}
	The last inequality comes from the fact that $C$ is a sufficiently large constant (say at least $4C_1+7$).

\end{proof}

\subsection{Proofs in Section \ref{sec:inductive}}

\leminductivelipschitz*

\begin{proof}
	
	From Lemma \ref{lem:lipschitz}, we have
	\begin{align*}
		\MoveEqLeft \abs{\sum_{p\in P} \sigma(p) e^{-\norm{x-p}^2} - \sum_{p\in P} \sigma(p) e^{-\norm{s-p}^2} } \\
		& \leq 
		(\norm{s}^2-\norm{x}^2 )\abs{\sum_{p\in P} \sigma(p) e^{-\norm{x-p}^2}} + 2\sum_{j=1}^d\abs{s_j-x_j}\abs{\sum_{p\in P} \sigma(p) e^{-\norm{\xi^{(j)}-p}^2} }
	\end{align*}
	
	We first analyze the first term.
	By the assumption of $\sigma$,
	\[
	(\norm{s}^2-\norm{x}^2 )\abs{\sum_{p\in P} \sigma(p) e^{-\norm{x-p}^2}} 
	\leq
	(\sum_{j=1}^d\abs{s_j-x_j}\abs{s_j+x_j} )D_i\cdot n_{i} e^{-I_i\norm{x}^2}
	\]
	We recall that the width of $S_i$ is $\frac{1}{C_0n_i}$ and $x\in B_\infty^d(n_{i+1})$ which implies the term $\sum_{j=1}^d\abs{s_j-x_j}\abs{s_j+x_j}$ is less than  $2d\cdot\frac{1}{C_0n_i}\cdot n_{i+1}$.
	Therefore, we have
	\begin{align*}
		(\norm{s}^2-\norm{x}^2 )\abs{\sum_{p\in P} \sigma(p) e^{-\norm{x-p}^2}}
		& <
		(2d\cdot\frac{1}{C_0n_i}\cdot n_{i+1} )D_i\cdot n_{i} e^{-I_i\norm{x}^2} \\ 
		& \leq
		\frac{1}{10}D_i\cdot n_{i+1} e^{-I_i\norm{x}^2}
	\end{align*}
	The last line uses the assumption of sufficiently large constant $C_0$ (say at least $20d$).
	
	Now, we analyze the second term.
	By the assumption of $\sigma$, we have
	\[
	2\sum_{j=1}^d\abs{s_j-x_j}\abs{\sum_{p\in P} \sigma(p) e^{-\norm{\xi^{(j)}-p}^2} } 
	<
	2\sum_{j=1}^d\abs{s_j-x_j}D_i\cdot n_{i} e^{-I_i\norm{x}^2}
	\]
	Recall that the width of the grid $S_i$ is $\frac{1}{C_0n_i}$ which implies the term $\sum_{j=1}^d\abs{s_j-x_j}$ is less than $d\cdot\frac{1}{C_0n_i}$.
	\begin{align*}
		2\sum_{j=1}^d\abs{s_j-x_j}\abs{\sum_{p\in P} \sigma(p) e^{-\norm{\xi^{(j)}-p}^2} } 
		& <
		2d\frac{1}{C_0n_i}D_i\cdot n_{i} e^{-I_i\norm{x}^2} \\
		& \leq
		\frac{1}{10}D_i\cdot e^{-I_i\norm{x}^2} 
		\leq
		\frac{1}{10}D_i\cdot n_{i+1} e^{-I_i\norm{x}^2}
	\end{align*}
	The last line uses the assumption of sufficiently large constant $C_0$ (say at least $20d$).
	
	By combining the above two terms, we have
	\[
	\abs{\sum_{p\in P} \sigma(p) e^{-\norm{x-p}^2} - \sum_{p\in P} \sigma(p) e^{-\norm{s-p}^2} }  \leq  \frac{1}{5}D_i\cdot n_{i+1} e^{-I_i\norm{x}^2} \qedhere
	\]
	
\end{proof}

\leminductive*

\begin{proof}
	
	If $x\in B_\infty^d(n_{i+1})$, by Lemma \ref{lem:inductive_lipschitz}, we have 
	\[
	\abs{\sum_{p\in P} \sigma(p) e^{-\norm{x-p}^2} - \sum_{p\in P} \sigma(p) e^{-\norm{s-p}^2} } \leq \frac{1}{5}D_i\cdot n_{i+1} e^{-I_i\norm{x}^2}
	\]
	where $s\in S_i$ is the closest point to $x$ that $\abs{s_j}>\abs{x_j}$ for all $j=1,2,\dots,d$.
	We have, by the assumption on $\sigma$, 
	\begin{align*}
		\abs{\sum_{p\in P} \sigma(p) e^{-\norm{x-p}^2}} 
		& \leq 
		\abs{\sum_{p\in P} \sigma(p) e^{-\norm{s-p}^2} } + \frac{1}{5}D_i\cdot n_{i+1} e^{-I_i\norm{x}^2} \\
		& \leq
		C_1n_{i+1}e^{-\frac{2}{3}\norm{s}^2} + \frac{1}{5}D_i\cdot n_{i+1} e^{-I_i\norm{x}^2}.
	\end{align*}
	Since $I_i < \frac{2}{3}$ and $\norm{x} < \norm{s}$, the term $e^{-\frac{2}{3}\norm{s}^2}$ is less than  $e^{-I_i\norm{x}^2}$.
	Also, for $C> C_1$, we have $C_1 + \frac{1}{5}D_i < C + \frac{1}{5}D_i = C + C\cdot\frac{1}{4}(1-\frac{1}{5^i}) = D_{i+1}$.
	Hence, 
	\[
	\abs{\sum_{p\in P} \sigma(p) e^{-\norm{x-p}^2}}  < D_{i+1}n_{i+1}e^{-I_i\norm{x}^2} < D_{i+1}n_{i+1}e^{-I_{i+1}\norm{x}^2}
	\]
	
	If $x\notin B_\infty^d(n_{i+1})$, which means one of the coordinate $\abs{x_j} > n_{i+1}$, then, by the assumption on $\sigma$, 
	\[
	\abs{\sum_{p\in P} \sigma(p) e^{-\norm{x-p}^2}} \leq D_i\cdot n_{i} e^{-I_i\norm{x}^2}
	\]
	Note that $I_i = \frac{1}{3} + \frac{1}{3}(1-\frac{1}{2^{\ell(n)-i}}) = I_{i+1} + \frac{1}{3\cdot2^{\ell(n)-i}}$.
	It means that, recalling that $n_{i+1} = \sqrt{3\cdot 2^{\ell(n)-i}\log n_i}$,
	\[
	I_i\norm{x}^2 = (I_{i+1} + \frac{1}{3\cdot2^{\ell(n)-i}})\norm{x}^2 > I_{i+1}\norm{x}^2 + \frac{1}{3\cdot2^{\ell(n)-i}}n_{i+1}^2 = I_{i+1}\norm{x}^2 + \log n_i.
	\]
	Therefore,
	\begin{align*}
		\abs{\sum_{p\in P} \sigma(p) e^{-\norm{x-p}^2} }
		& < 
		D_i\cdot n_{i} e^{-I_i\norm{x}^2} 
		< 
		D_i\cdot n_{i} e^{-I_{i+1}\norm{x}^2 - \log n_i} \\
		& < 
		D_i e^{-I_{i+1}\norm{x}^2}
		<
		D_{i+1}n_{i+1}e^{-I_{i+1}\norm{x}^2} 
	\end{align*} 
	
\end{proof}

\subsection{Proofs in Section \ref{sec:full}}

\lemboundeddisc*

\begin{proof}
	
	First, we denote a $ (\sum_{i=0}^{\ell(n)-1}\abs{S_i}+\abs{P} )$-by-$ (\sum_{i=0}^{\ell(n)-1}\abs{S_i}+\abs{P} )$ matrix $M$ such that 
	\[
	M_{x,y} =  \left\{\begin{array}{lr}
		e^{-\norm{\sqrt{3}x - \sqrt{3}y}^2} & \text{for both $x,y\in P$}\\
		e^{-\norm{\frac{1}{\sqrt{3}}x - \sqrt{3}y}^2} & \text{for $x\in \cup_{i=0}^{\ell(n)-1}S_i$ and $y\in P$}\\
		e^{-\norm{\sqrt{3}x - \frac{1}{\sqrt{3}}y}^2} & \text{for $x\in P$ and $y\in \cup_{i=0}^{\ell(n)-1}S_i$}\\
		e^{-\norm{\frac{1}{\sqrt{3}}x - \frac{1}{\sqrt{3}}y}^2} & \text{for both $x,y\in \cup_{i=0}^{\ell(n)-1}S_i$}
	\end{array} \right.
	\]
	Since the Gaussian kernel is a positive-definite kernel, $M$ can be expressed as $V^TV$.
	We denote the columns of $V$ be $v^{(p)}$ for $p\in P$ and $w^{(s')}$ for $s'\in \cup_{i=0}^{\ell(n)-1}S_i$.
	Note that $\norm{v^{(p)}}$ and $\norm{w^{(s')}}$ are $1$.
	
	By Banaszczyk's Theorem, we have
	\allowdisplaybreaks
	\begin{align*}
		\MoveEqLeft \prob{\abs{\sum_{p\in P} \sigma(p)e^{-\norm{s'-p}^2 }} > C_1n_{i+1}e^{-\frac{2}{3}\norm{s'}^2} } \\
		& =
		\prob{\abs{\sum_{p\in P} \sigma(p)e^{-\frac{2}{3}\norm{s'}-\norm{\frac{1}{\sqrt{3}}s' - 
						\sqrt{3}p}^2+2\norm{p}^2}} > C_1n_{i+1}e^{-\frac{2}{3}\norm{s'}^2} }\\
		& =
		\prob{\abs{\sum_{p\in P}\sigma(p)\inner{ (
					\begin{matrix}
						0 \\
						w^{(s')}
					\end{matrix}
					)}{\frac{ (
						\begin{matrix}
							1 \\
							v^{(p)}e^{2\norm{p}^2}
						\end{matrix}
						)}{\sqrt{1+e^{4d}}}}} > \frac{C_1}{\sqrt{1+e^{4d}}}n_{i+1}} \\
		& \leq
		C'e^{-\frac{C''C_1^2n_{i+1}^2}{1+e^{4d}}}
	\end{align*}
	for any $s' \in S_i$ where $i=0,1,\dots,\ell(n)-1$.
	Recall that $C',C''$ are the absolute constants shown in Banaszczyk's Theorem.
	Note that $\abs{S_i} \leq (2C_0n_{i+1}n_i)^d < (2C_0n_i^2)^d$. 
	Therefore, if $C_1$ is a sufficiently large constant that is the same constant shown in Lemma \ref{lem:base} and Lemma \ref{lem:inductive}, we have 
	\begin{align*}
		\prob{ \abs{\sum_{p\in P} \sigma(p)e^{-\norm{s'-p}^2 }} > C_1n_1e^{-\frac{2}{3}\norm{s'}^2} } \leq
		C'(2C_0n_0^2)^de^{-\frac{C^{''}C_1^2n_{1}^2}{1+e^{4d}}} \leq \frac{1}{10}
	\end{align*}
	for all $s'\in S_0$ and also
	\begin{align*}
		\prob{ \abs{\sum_{p\in P} \sigma(p)e^{-\norm{s'-p}^2 }} > C_1n_{i+1}e^{-\frac{2}{3}\norm{s'}^2} } 
		& \leq
		C'(2C_0n_i^2)^de^{-\frac{C''C_1^2n_{i+1}^2}{1+e^{4d}}} \\
		& <
		C'(2C_0n_i^2)^d{n_{i}}^{-\frac{3C''C_1^22^{\ell(n)-i}}{1+e^{4d}}}
		< \frac{1}{10\cdot2^{\ell(n)-i}}
	\end{align*}
	for all $s'\in S_i$ where $i=1,2,\dots,\ell(n)-1$.
	Moreover, by Banaszczyk's Theorem, we can show
	\begin{align*}
		\prob{ \abs{\sum_{p\in P} \sigma(p)} > C } 
		& =
		\prob{ \abs{\sum_{p\in P}\sigma(p)\inner{e^{(1)}}{\frac{ (
						\begin{matrix}
							1 \\
							v^{(p)}e^{2\norm{p}^2}
						\end{matrix}
						)}{\sqrt{1+e^{4d}}}}} > \frac{1}{\sqrt{1+e^{4d}}}C } \\
		& \leq
		\frac{1}{10}
	\end{align*}
	Here, $e^{(1)}$ is a zero vector except that the first coordinate is $1$.
	Now, by Lemma \ref{lem:base} and inductively Lemma \ref{lem:inductive}, we conclude that 
	\[
	\abs{\sum_{p\in P} \sigma(p) e^{-\norm{x-p}^2} } < C\cdot \frac{5}{4}\cdot n_{\ell(n)} e^{-\frac{1}{3}\norm{x}^2}
	\]
	for all $x\in \mathbb{R}^d$ and 
	\[
	\abs{\sum_{p\in P} \sigma(p)} \leq C
	\]
	with probability at least $\frac{1}{2}$. \qedhere

\end{proof}

\lemboundeddiscbalance*

\begin{proof}
	
	From Lemma \ref{lem:bounded_disc}, we have 
	\[
	\abs{\sum_{p\in P} \sigma(p) e^{-\norm{x-p}^2} } < C\cdot \frac{5}{4}\cdot n_{\ell(n)} e^{-\frac{1}{3}\norm{x}^2}
	\]
	for all $x\in \mathbb{R}^d$ and 
	\[
	\abs{\sum_{p\in P} \sigma(p)} \leq C
	\]
	with probability at least $\frac{1}{2}$.
	
	Suppose there are more $+1$ than $-1$.
	Choose $\frac{1}{2}C$ points assigned $+1$ arbitrarily and flip them to $-1$ such that it makes the difference zero (or at most $1$ if $\abs{P}$ is an odd number).
	Denote $P_+ = \setdef{p\in P}{\sigma(p) = +1}$ and $P_- = \setdef{p\in P}{\sigma(p)  =-1}$.
	Also, $P_+'$ and $P_-'$ are defined in the same way after flipping the value.
	For any $x \in \mathbb{R}^d$,
	\begin{align*}
		\MoveEqLeft\abs{\sum_{p\in P_+'}e^{-\norm{x-p}^2}-\sum_{p\in P_-'}e^{-\norm{x-p}^2}} \\
		& \leq 
		\abs{\sum_{p\in P_+}e^{-\norm{x-p}^2}-\sum_{p\in P_-}e^{-\norm{x-p}^2}}+\abs{\sum_{p\in P_-'\cap P_+}e^{-\norm{x-p}^2}}
	\end{align*}
	We can bound the second term in the following way.
	\begin{align*}
		\MoveEqLeft \abs{\sum_{p\in P_-'\cap P_+}e^{-\norm{x-p}^2}} \\
		& =
		\abs{\sum_{p\in P_-'\cap P_+}e^{-\frac{1}{3}\norm{x}^2}e^{-\frac{2}{3}\norm{x-\frac{3}{2}p}^2}e^{\frac{1}{2}\norm{p}^2} } \\
		& \leq
		e^{-\frac{1}{3}\norm{x}^2}e^{\frac{1}{2}d} (\sum_{p\in P_-'\cap P_+}1 ) & \text{by $e^{-\frac{2}{3}\norm{x-\frac{3}{2}p}^2} \leq 1$ and $\norm{p}^2 \leq d$} \\
		& \leq
		\frac{1}{2}e^{-\frac{1}{3}\norm{x}^2}e^{\frac{1}{2}d}\cdot C
	\end{align*}
	Then, 
	\begin{align*}
		\MoveEqLeft\abs{\sum_{p\in P_+'}e^{-\norm{x-p}^2}-\sum_{p\in P_-'}e^{-\norm{x-p}^2}} \\
		& \leq 
		\abs{\sum_{p\in P_+}e^{-\norm{x-p}^2}-\sum_{p\in P_-}e^{-\norm{x-p}^2}}+\abs{\sum_{p\in P_-'\cap P_+}e^{-\norm{x-p}^2}} \\ 
		& \leq
		C\cdot \frac{5}{4}\cdot n_{\ell(n)} e^{-\frac{1}{3}\norm{x}^2} + \frac{1}{2}e^{-\frac{1}{3}\norm{x}^2}e^{\frac{1}{2}d}\cdot C \\
		& =
		O(n_{\ell(n)}e^{-\frac{1}{3}\norm{x}^2}) \qedhere
	\end{align*}

\end{proof}

\lemfull*

\begin{proof}
	
	From Lemma \ref{lem:bounded_disc_balance}, for each $Q_g$, we have 
	\[
	\abs{\sum_{p\in Q_g} \sigma_g(p) e^{-\norm{x-p}^2} } = O(n_{\ell(n)}e^{-\frac{1}{3}\norm{x-g}^2})
	\]
	for all $x\in \mathbb{R}^d$	with probability at least $\frac{1}{2}$.
	Since half of the points in $Q_g$ are $+1$, it means that half of the points in $P$ are also $+1$.
	
	Next, we argue that the final discrepancy $\abs{\mathcal{D}_{P,\sigma}(x)} = O(n_{\ell(n)})$ for all $x\in\mathbb{R}^d$.
	For the simpler computation, we loosen the bound as 
	\[
	\abs{\sum_{p\in Q_{g}}\sigma_g(p)e^{-\norm{x-p}^2}} = O(n_{\ell(n)}e^{-\frac{1}{3}\norm{x-g}_\infty^2})
	\]
	For any $x\in \mathbb{R}^d$ and an integer $r$, the $\ell_\infty$-ball centered at $x$ of radius $2r$ intersects at most $(2r+1)^d$ $\ell_\infty$-ball $Q_g$.
	It implies that the final discrepancy is 
	\begin{align*}
		\abs{\sum_{p\in P} \sigma(p) e^{-\norm{x-p}^2}} 
		& =
		\abs{\sum_{g\in \Gr_d(2)}\sum_{p\in Q_g} \sigma_g(p) e^{-\norm{x-p}^2}} \\
		& \leq
		\sum_{g\in \Gr_d(2)}\abs{\sum_{p\in Q_g} \sigma_g(p) e^{-\norm{x-p}^2}} \\
		& =
		\sum_{g\in \Gr_d(2)}O(n_{\ell(n)}e^{-\frac{1}{3}\norm{x-g}_\infty^2}) \\
		& =
		O (\sum_{r=1}^\infty \frac{(2r+1)^d}{e^{\frac{1}{3}(2(r-1))^2}} n_{\ell(n)}  ) = O(n_{\ell(n)})
	\end{align*}
	since the summation $\sum_{r=1}^\infty \frac{(2r+1)^d}{e^{\frac{1}{3}(2(r-1))^2}}$ converge to a constant, which depends on $d$ only.
	By Lemma \ref{lem:recurrence}, the final discrepancy is $\abs{\mathcal{D}_{P,\sigma}(x)} = O(1)$ for all $x\in\mathbb{R}^d$.

	We now analyze the running time.
	We first need to construct the set of input vectors $V_g$ in line \ref{line:construct_input}.
	It can be constructed by Cholesky decomposition and its running time is $O(\abs{Q_g}^3)$.
	We then analyze the running time of line \ref{line:main_step}.
	By Lemma \ref{lem:bounded_disc_balance}, for each $Q_g$, the algorithm in Theorem \ref{thm:b_thm} constructs a coloring that has the discrepancy $O(n_{\ell(n)}e^{-\frac{1}{3}\norm{x}^2})$ with probability $1/2$.
	Therefore, the expected number of executions of line \ref{line:main_step} is $2$.
	Recall that the running time of the algorithm in Theorem \ref{thm:b_thm} is $\abs{Q_g}^{\omega+1}$ where $\omega$ is the exponent of matrix multiplication.
	Also, it takes $O ( \abs{Q_g} \cdot\abs{\cup_{i=0}^{\ell(n)-1}S_{i}}  ) = O ( \abs{Q_g} \cdot \log^{O(d)}\abs{Q_g}  ) $ to verify the discrepancy in line \ref{line:check} by Lemma \ref{lem:base} and Lemma \ref{lem:inductive}.
	Finally, by assuming $d$ is constant, the total expected running time is $O(n^{\omega+1})$. \qedhere
	
\end{proof}

\end{document}